\newtheorem{theorem}{Theorem}
\newtheorem{lemma}{Lemma}
\newcommand{\csection}[1]
{\begin{center}
	\stepcounter{section}
	{\bf\large\arabic{section}. #1}
\end{center}
\vspace{-0.15 cm}
}
\newcommand{\scsection}[1]
{\begin{center}
	{\bf\large #1}
\end{center}
\vspace{-0.15 cm}
}
\newcommand{\csubsection}[1]{
\vspace{-0.25 cm}
\begin{center}
	\stepcounter{subsection}
	{\it\arabic{section}.\arabic{subsection}. #1}
\end{center}
\vspace{-0.25 cm}
}
\newcommand{\scsubsection}[1]{
\vspace{-0.25 cm}
\begin{center}
	\stepcounter{subsection}
	{\it #1}
\end{center}
\vspace{-0.25 cm}
}
\def\beq{\begin{equation}}
\def\eeq{\end{equation}}
\def\beqr{\begin{eqnarray}}
\def\eeqr{\end{eqnarray}}
\def\beqrs{\begin{eqnarray*}}
\def\eeqrs{\end{eqnarray*}}
\def\bet{\begin{theorem}}
\def\eet{\end{theorem}}
\def\bel{\begin{lemma}}
\def\eel{\end{lemma}}
\def\bg{\begin{figure}[tbph]\begin{center}}
	\def\eg{\end{center}\end{figure}}
\def\bc{\begin{center}}
\def\ec{\end{center}}
\def\widebar{\accentset{{\cc@style\underline{\mskip10mu}}}}
\def\Widebar{\accentset{{\cc@style\underline{\mskip8mu}}}}
\def\1{\mbox{\boldmath $1$}}
\def\0{\boldsymbol{0}}
\def\mL{\mathcal L}
\def\mR{\mathbb{R}}
\def\mS{\mathcal S}
\def\rank{\mbox{rank}}
\def\tr{\mbox{tr}}
\def\diag{\mbox{diag}}
\def\|{\Vert}
\def\wt{\widetilde}
\def\boxit#1{\vbox{\hrule\hbox{\vrule\kern6pt\vbox{\kern6pt#1\kern6pt}\kern6pt\vrule}\hrule}}
\def\wh{\widehat}
\numberwithin{equation}{section}
\newcommand{\killproofname}{\unskip\nopunct}}
\newcommand{\killproofname}[1]{\unskip\aftergroup\ignorespaces\ignorespaces}}
\begin{document}

\begin{center}
	{\bf\Large An Asymptotic Analysis of Minibatch-Based Momentum Methods for Linear Regression Models}\\
	Yuan Gao$ ^{a,b} $, Xuening Zhu$ ^c $, Haobo Qi$ ^d, $ Guodong Li$^e  $, Riquan Zhang$ ^{a,b} $, Hansheng Wang$ ^d $\\
	{\it\small
		$ ^a $ School of Statistics, East China Normal University, Shanghai, China\\
		$ ^b $ Key Laboratory of Advanced Theory and Application in Statistics and Data Science - MOE, East China Normal University, Shanghai, China\\
		$ ^c $School of Data Science, Fudan University, Shanghai, China\\
		$ ^d $Guanghua School of Management, Peking University, Beijing, China\\
		$ ^e $Department of Statistics and Actuarial Science, University of Hong Kong, Hong Kong, China}

	
\end{center}

\begin{abstract}
	Momentum methods have been shown to accelerate the convergence of the standard gradient descent algorithm in practice and theory.
	In particular, the minibatch-based gradient descent methods with momentum (MGDM) are widely used to solve large-scale optimization problems with massive datasets.
	Despite the success of the MGDM methods in practice, their theoretical properties are still underexplored. To this end, we investigate the theoretical properties of MGDM methods based on the linear regression models. We first study the numerical convergence properties of the MGDM algorithm and further provide the theoretically optimal tuning parameters specification to achieve faster convergence rate. In addition, we explore the relationship between the statistical properties of the resulting MGDM estimator and the tuning parameters. Based on these theoretical findings, we give the conditions for the resulting estimator to achieve the optimal statistical efficiency.
	Finally, extensive numerical experiments are conducted to verify our theoretical results.

\end{abstract}
\textbf{KEYWORDS:} Gradient Descent, Momentum Method, Fixed Minibatch, Shuffled Minibatch

\newpage

\csection{INTRODUCTION}

Modern statistical analysis often involves massive datasets and sophisticated models, which makes the corresponding statistical computation extremely challenging. For example, the well-known \textit{ImageNet} dataset, {contains} more than one million photos and takes up approximately 150GB space on a hard drive \citep{deng2009imagenet}. Obviously, most computers available to researchers cannot easily load such an enormous dataset as a whole into its memory. Then, how to process these massive datasets efficiently with statistical guarantees becomes a problem of great interest. To address this issue, it is in urgent need to develop novel scalable algorithms for statistical analysis.

In recent yeas, a great number of researches have been devoted to dealing with large-scale statistical optimization problems. For example, the distributed computing methods, based on the divide-and-conquer strategy from computer science literature, become very popular in the statistics community.
Many statistical learning problems have been investigated on a distributed system, such as generalized linear models \citep{chen2014split, battey2018distributed, tang2020distributed}, quantile regression \citep{chen2019quantile, chen2020distributed}, kernel ridge regression \citep{chang2017distributed,lin2020distributed}, support vector machine \citep{wang2019distributed}, principal component analysis \citep{fan2019distributed, chen2021distributed}, and references therein.
Clearly, the distributed computing frameworks reduce the computational burden of  one single machine by employing a computer cluster.
However, it still replies on a distributed architecture for implementation, which can be expensive for many practical problems.
With limited computational resources, it is desirable 
to conduct computation with a single computer and feasible optimization algorithms.
In this regard, stochastic optimization methods, especially stochastic gradient descent (SGD) and its derivatives, play an important role.

Before introducing SGD and its derivatives, we first discuss their classical versions.
As a first-order optimization method, gradient descent (GD) is arguably the simplest one.
Unlike second-order methods, such as Newton-Raphson algorithm, a first-order method requires no computation of the Hessian matrices.
In fact, a standard GD algorithm only requires the gradients of the objective function and a learning rate. 
Therefore, it is computationally more feasible for sophisticated models with massive datasets. Furthermore, the cost in computer memory for storing Hessian matrices can be significantly reduced.
Unfortunately, there is no free lunch. The tradeoff for GD methods' excellent practical feasibility is slow convergence. In other words, the convergence rate of a GD algorithm is usually much slower than that of a Newton-Raphson algorithm, if the Newton-Raphson algorithm can be implemented. This is particularly true for ill-conditioned problems \citep{polyak1987introduction}. In this case, feature directions with tiny variability are likely to be ignored by a standard GD algorithm. To partially fix this problem, \cite{polyak1964} proposed a heavy-ball method to accelerate the standard GD algorithm. Specifically, it uses the information from not only the current step but also the previous step (i.e., the momentum). As a consequence, the search direction can be slightly modified. With this slight but novel modification, the convergence rate of gradient descent with momentum (GDM) could be much improved compared to the standard GD algorithm.

Despite its theoretical popularity, the practical implementation of GDM could be nontrivial on a massive dataset. Under a traditional setup with small sample size and simple model structure, a GDM algorithm can be easily implemented by loading the entire dataset into the computer memory as a whole. Subsequently, various gradients and momentums can be efficiently computed and processed. Unfortunately, such a common practice could be seriously challenged if the target model structure is sufficiently complex and the dataset is massive in size. For example, the \textit{ResNet50} model \citep{he2016deep} is a classical deep learning model with more than 25 million parameters. In this case, it would take a long time to compute the gradient and momentum needed for performing one single step of iteration. Thus, implementing GDM algorithms on such a complex model (i.e., the \textit{ResNet50} model) with such a massive dataset (i.e., the \textit{ImageNet} dataset) is very challenging.

Since a massive dataset cannot be easily loaded into computer memory as a whole, it has to be processed in a batch-by-batch manner. Here, ``batch'' refers to a subsample of the original whole dataset. Subsequently, the important quantities (e.g., gradient and momentum) can be computed based on the batch. This approach leads to various stochastic gradient descent (SGD) methods. Here, ``stochastic" refers to the batch being randomly subsampled from the original whole dataset.
Then, according to how this subsample (i.e., batch) is generated and utilized, different SGD algorithms can be developed. The history of SGD can date back to the pioneering work of \cite{robbins1951stochastic}. 
The computational efficiency of SGD finds wide application in the large-scale optimization problems, and attracts a large number of follow-up studies \citep[see, e.g.,][and references therein]{wang2013variance, wang2017stochastic, yang2019multilevel, chen2021first}. 
Although many existing works focus on the numerical convergence rate of the stochastic optimization algorithms, there are also a few of them that investigate the resulting estimator from a statistical viewpoint.
For example, \cite{toulis2017asymptotic} proposed an implicit SGD procedure, and analyzed the asymptotic behavior of the estimators resulting from the standard and implicit SGD procedures. \cite{chen2020statistical} and \cite{zhu2021online} investigated the statistical inference problem based on the averaged SGD estimator, and provided several methods to estimate the asymptotic covariance.
\cite{luo2020renewable} developed a renewable estimation procedure that extends the standard SGD and averaged implicit SGD algorithm, and further studied the asymptotic properties of the resulting estimator.
In general, the algorithms that they discussed are one-pass in nature and should be very useful for streaming data analysis. However, for non-streaming situations where the massive dataset is already on the hard drive, the one-pass type of algorithm is less frequently used in practice. Instead, multiple passes (or epochs) are more commonly used for better statistical efficiency.

In fact, the many popularly used SGD algorithms in practice are based on shuffled minibatches. In other words, the entire SGD algorithm is divided into different epochs. For each epoch, the original whole dataset is randomly shuffled and then sequentially divided into different batches. Subsequently, different batches are sequentially loaded into the computer memory and then processed. Once all the batches have been successfully executed, the whole dataset might need to be reshuffled. This process accomplishes one epoch of the minibatch-based SGD algorithm. Accordingly, each sample in the whole dataset is ensured to be used exactly once for every epoch iteration. Our limited numerical experiments suggest that this approach often leads to estimators with satisfactory statistical efficiency. As a consequence, shuffled minibatch-based GD algorithms are indeed the most widely used SGD algorithms in practice. In fact, it has been implemented by many important software libraries, such as TensorFlow \citep{abadi2016tensorflow} and PyTorch \citep{paszke2017automatic}.

However, our theoretical understanding of this most popularly used SGD algorithm (i.e., minibatch-based gradient descent with momentum, MGDM) is extremely limited. A number of important theoretical questions remain unknown. For example, under what conditions should an MGDM algorithm numerically converge? Past literature suggests that for a standard GD algorithm, the learning rate should not be too large. Otherwise, the algorithm might not numerically converge. Consequently, it is natural to question whether this remains true for MGDM algorithms. We are particularly interested in understanding the important role played by the momentum parameter. Next, assuming that the MGDM algorithm converges numerically and leads to the MGDM estimator, what are the asymptotic properties of the resulting MGDM estimator? This would facilitate the understanding of the roles played by both the learning rate and momentum parameter.

To address these important questions, we start with a relatively simple model (i.e., a linear regression model) and one particular type of MGDM algorithm (i.e., the fixed minibatch-based GDM algorithm, FMGDM). By doing so, we are able to analyze the theoretical properties of the MGDM method both numerically and asymptotically. This leads to fruitful theoretical findings. In particular, we identify the nearly sufficient and necessary conditions for the tuning parameters (i.e., the learning rate and the momentum parameter) to ensure numerical convergence of the algorithm. In addition, we further investigate the numerical convergence rate and provide the optimal tuning parameters specification in theory.
Subsequently, we study the asymptotic properties of the resulting MGDM estimator.
We find that a diminishing learning rate is needed for the resulting estimator to achieve global statistical efficiency.
Finally, we conduct extensive numerical experiments to validate our theoretical findings.

In summary, we attempt to make the following important contributions to the existing literature. First, we systematically study the MGDM algorithms for both shuffled and fixed minibatch. A relatively complete theory is provided for both numerical and asymptotic properties of the MGDM algorithm. Second, our theory leads to nontrivial findings on the important relationship between statistical efficiency of the resulting estimator and the tuning parameters. In particular, we find that as the learning rate decreases, both MGDM estimators (i.e., fixed and shuffled) can be statistically as efficient as the global OLS estimator.
Last, extensive numerical studies are presented to verify our theoretical findings and demonstrate the outstanding performance of our proposed FMGDM method.

The rest of the article is organized as follows. Section 2 first introduces the MGDM method under the linear regression setup. Then, the numerical convergence properties and the statistical efficiency of the resulting estimator are investigated.
Section 3 includes extensive numerical experiments to corroborate our theoretical findings.
We conclude the article in Section 4 and discuss a number of interesting topics for future study.

\csection{MOMENTUM GRADIENT DESCENT WITH FIXED MINI-BATCH}

\csubsection{Classical Gradient Descent with Momentum}

We first introduce some frequently used notations for this paper. Let $ A $ be a matrix in $ \mR^{p_1 \times p_2} $. We use $ \lambda_{\max}(A) $ and  $ \lambda_{\min}(A) $ to denote the largest and smallest eigenvalue , respectively, of $ A $, if $ A $ is symmetric. We define $ \|A\| =\sqrt{\lambda_{\max}(A^\top A)}$ and $ \|A\|_F =  \sqrt{\tr(A^\top A)}$ as the operator norm and the Frobenius norm of $ A $, respectively. In addition, $ \rho(A) = \max\{|\lambda|: \lambda \text{ is the eigenvalue of }A \} $, which stands for the spectral radius of $ A $. As usual, $ \|v\| = \sqrt{v^\top v} $ is the standard $ l_2 $-norm of an arbitrary vector $ v\in\mR^p $.
We start with a classical linear regression model. Let $ (Y_i, X_i) $ be the observation collected from $ i $-th subject with $ 1\le i\le N $.
Assume
\begin{equation}\label{linear model}
	Y_i = X_i^\top\theta + \varepsilon_i,\  \text{with}\  1\le i\le N ,
\end{equation}
where $ X_i \in\mR^p$ is the covariate, $ Y_i \in \mR$ is the response, $ \varepsilon_i $ is the random error with mean $ 0 $ and variance $ \sigma^2 $, and $ \theta\in\mR^p $ is the unknown parameter vector with true value given by $ \theta_0 = (\theta_{01},\dots, \theta_{0p})^\top$. Suppose the whole sample is indexed by $ \mathbb{S} = \{1,\dots, N\} $. To estimate the unknown parameter vector $ \theta $, we usually minimize the ordinary least squares (OLS) loss function as
\begin{equation*}
	\mL(\theta) =N^{-1}\sum_{i=1}^N \ell_i ( \theta)= (2N)^{-1}\sum_{i=1}^N \Big(Y_i - X_i^\top \theta\Big)^2,
\end{equation*}
where $ \ell_i( \theta) = 2^{-1}(Y_i - X_i^\top \theta)^2 $ is the loss evaluated on the $ i $-th observation. We know that the solution to this problem is the classical ordinary least squares (OLS) estimator that has an explicit form as $ \widehat{\theta}_\text{ols} = \arg\min_{\theta} \mL(\theta) = \widehat{\Sigma}_{xx} ^{-1} \widehat{\Sigma}_{xy}$, where $ \widehat{\Sigma}_{xx} = N^{-1} \sum_{i=1}^N X_i X_i^\top \in\mR^{p\times p} $ and $ \widehat{\Sigma}_{xy} = N^{-1} \sum_{i=1}^N X_i Y_i \in \mR^p$. Under some mild conditions, one can show that $ \sqrt{N} (\widehat{\theta}_\text{ols} - \theta_0) \to_d N(\0, \sigma^2 \Sigma_{xx}^{-1}) $, where $ \Sigma_{xx} $ is the covariance matrix of $ X_i $ \citep{rao1973linear}. To practically compute $ \widehat{\theta}_\text{ols} $, the sample covariance matrix $ \widehat{\Sigma}_{xx} $ needs to be calculated and inverted. This leads to a computational complexity of order $ O(Np^2 + p^3) $ in general.

Another popular way to compute $ \widehat{\theta}_\text{ols} $ is to use the first-order optimization methods in an iterative way.
Arguably, the most basic first-order optimization method is the gradient descent (GD) algorithm.
For the OLS problem, a standard GD algorithm should update the estimates iteratively as $\widehat \theta^{(t)} = \widehat \theta ^{(t-1)} - \alpha \dot{\mL}(\widehat \theta ^{(t-1)})$, where $  \dot{\mL}(\theta ) $ denotes the gradient of $ \mL $ at $ \theta $ and $ \alpha > 0 $ is the learning rate controlling the step size.
As shown by \cite{nesterov2018lectures}, with an appropriately selected $\alpha $, the estimates $ \{\widehat{\theta}^{(t)} \} $ generated by the GD algorithm should satisfy $ \|\widehat{\theta}^{(t)} -  \widehat{\theta}_\text{ols}\| \le \rho^t  \|\widehat{\theta}^{(0)} -  \widehat{\theta}_\text{ols}\|$ for some convergence factor $ \rho \in [0,1) $ and an arbitrary initial value $ \widehat{\theta}^{(0)} $.
The optimal convergence factor can be achieved if we choose $ \alpha = 2 / \{\lambda_{\max}(\widehat{\Sigma}_{xx})+ \lambda_{\min}(\widehat{\Sigma}_{xx}) \} $.
In this case, the optimal convergence factor should be $ \rho = (\kappa - 1)/ (\kappa +1) $ , where $ \kappa = \lambda_{\max}(\widehat{\Sigma}_{xx}) / \lambda_{\min}(\widehat{\Sigma}_{xx}) \ge 1$ is the condition number of $ \widehat{\Sigma}_{xx} $.

However, when the condition number $ \kappa $ is large, the convergence factor $ \rho $ could be very close to $ 1 $.
Consequently, the convergence rate could be painfully slow.
To accelerate it, \cite{polyak1964} proposed the heavy-ball method, which is also called the momentum method \citep{sutskever2013importance}.
Specifically, in this work, we investigate the gradient descent with the classical momentum, which updates the estimates as
\begin{gather}
	\widehat{\theta}^{(t)} = \widehat{\theta}^{(t-1)} - v^{(t)}\label{MGD1},\\
 	v^{(t)} = \gamma v^{(t-1)} + \alpha \dot{\mL}(\widehat \theta ^{(t-1)})\label{MGD2},
\end{gather}
where $  \alpha >0$ is the learning rate and $\gamma>0 $ is the momentum parameter. \cite{polyak1964} showed that, by choosing appropriate tuning parameters $ \alpha$ and $ \gamma $, the optimal convergence factor of the momentum method could be achieved as $ \rho = (\sqrt{\kappa} - 1) / (\sqrt{\kappa} + 1) $. In contrast, that of a standard GD algorithm is $ \rho = (\kappa - 1)/ (\kappa +1)  $, which is strictly larger than that of the GD with momentum (GDM) method as long as $ \kappa>1 $. This suggests that the optimal convergence rate of a GDM algorithm should be faster than that of a standard GD algorithm.

\csubsection{ A Fixed Mini-batch Based Gradient Descent with Momentum}

Despite its theoretical attractiveness, a standard GDM method cannot be immediately used to handle massive datasets.
In many cases, the dataset could be too large to be comfortably processed as a whole by computer memory.
As a consequence, it has to be placed on the hard drive and then processed in a batch-by-batch manner.
In practice, this approach leads to the most popularly used minibatch-based gradient descent methods with momentum (MGDM).
Specifically, to practically implement an MGDM algorithm, we first need to randomly partition the whole sample $\mathbb{S}$  into $ M $ disjoint minibatches as $ \mathbb{S}=\cup_{m=1}^M  \mS_{(m)}$ with $ \mS_{(m_1)} \cap \mS_{(m_2)} = \emptyset $ for any $ m_1 \ne m_2 $.
Throughout the article, we assume that $ M $ is fixed.
Without loss of generality, we assume that $ |\mS_{(m)} |=n $ for each $ 1\le m\le M $.
Obviously, we should have $ N = M n $.
As mentioned before, we start with the fixed minibatch-based GDM (FMGDM) method due to its analytical simplicity.
In other words, we assume that the random partition $ \mathbb{S}=\cup_{m=1}^M  \mS_{(m)}  $, once given, should be fixed throughout the rest of the algorithm.

In the fixed minibatch setting, the OLS loss function based on $ \mS_{(m)} $ can be defined as $ \mL_{(m)}(\theta)=(2n)^{-1}\sum_{i \in \mS_{(m)}} (Y_i - X_i^\top \theta)^2 $.
The corresponding gradient is $ \dot \mL_{(m)}(\theta) = \widehat \Sigma_{xx}^{(m)}\theta - \widehat \Sigma_{xy}^{(m)} $, where $ \widehat{\Sigma}_{xx}^{(m)} = n^{-1} \sum_{i \in \mS_{(m)} }X_i X_i^\top\in \mR^{p\times p}$ and $ \widehat{\Sigma}_{xy}^{(m)} = n^{-1} \sum_{i \in \mS_{(m)} }X_i Y_i\in \mR^p$.
Subsequently, a standard GDM algorithm should be executed in a batch-by-batch manner.
Specifically, let $ \widehat{\theta}^{(t, m)} $ be the estimate obtained from the $ m $-th minibatch in the $ t $-th epoch.
We then should have
\begin{gather}
	\widehat \theta^{(t, m)} = \widehat \theta ^{(t, m-1 )} - v^{(t,m)}, \label{FMBM1}\\
	v^{(t,m)} = \gamma v^{( t,m-1)} + \alpha \dot\mL_{(m)}(\widehat \theta^{(t, m-1)})\label{FMBM2},
\end{gather}
where $ \widehat \theta^{(t,0)} =   \widehat \theta^{(t-1,M)}$, $ v^{(t,0)}=v^{(t-1,M)} $, and $ \widehat \theta^{(0,m)} $ ($ 1\le m\le M $) is the initial estimate specified for the $ m $-th minibatch.
This immediately leads to three important questions.
First, theoretically, if we can treat \eqref{FMBM1} and \eqref{FMBM2} as a linear dynamical system, does this dynamical system have a stable solution?
Second, does the FMGDM algorithm converge to this stable solution?
Last, if the FMGDM algorithm does converge to the stable solution under appropriate conditions, what is the statistical properties of the resulting estimator?

\csubsection{The Linear Dynamical System and Stable Solution}

We investigate the first question in this subsection: does there exist a stable solution to the FMGDM algorithm?
To address this important problem, we temporarily assume that there does exist a numerical limit $ \widehat \theta^{(m)} $ such that $ \widehat \theta^{(t, m)}\to\widehat \theta^{(m)} $ as $t\to\infty $ for each $ 1\le m\le M $.
This leads to an analytical solution as $ \widehat \theta^{(m)}  $ for each $ 1\le m\le M $.
This analytical solution further helps us to understand the conditions needed for its unique existence.
Specifically, we write the stable solution vector as $ \widehat \theta^* = (\widehat \theta^{(1)\top}, \dots, \widehat \theta^{(M)\top})^\top \in \mR^{q} $ with $ q = Mp $.
Then, by \eqref{FMBM1} and \eqref{FMBM2}, we know that $  \widehat \theta^* $ should be the solution to the following linear dynamical system,
\begin{align} \label{dynamical system}
\begin{cases}
	\widehat \theta^{(t,1)} = \wh\Delta^{(1)} \widehat \theta^{(t,M)} - \gamma \widehat \theta^{(t,M-1)} + \alpha\widehat \Sigma_{xy}^{(1)}\\
	\widehat \theta^{(t,2)} = \wh\Delta^{(2)}  \widehat \theta^{(t,1)} - \gamma \widehat \theta^{(t,M)} + \alpha\widehat \Sigma_{xy}^{(2)}\\
	\cdots\\
	\widehat \theta^{(t,M)} = \wh\Delta^{(M)}  \widehat \theta^{(t,M-1)} - \gamma \widehat \theta^{(t,M-2)} + \alpha\widehat \Sigma_{xy}^{(t,M)},\\
\end{cases}
\end{align}
where $ \wh\Delta^{(m)} =(1+\gamma) I_p - \alpha \widehat \Sigma_{xx}^{(m)} \in \mR^{p\times p}$ for $ 1\le m\le M $.
We further write $ \widehat \Sigma_{xy}^* =   \big(\widehat\Sigma_{xy}^{(1)\top}, \dots, \widehat \Sigma_{xy}^{(M)\top}\big)^\top \in \mR^{q}$, and
\begin{align}\label{Omega_hat}
\wh\Omega = \begin{bmatrix}
	I_p&\0&\0&\cdots&\0&\gamma I_p&-\wh\Delta^{(1)}\\
	-\wh\Delta^{(2)}&I_p&\0&\cdots&\0&\0&\gamma I_p\\
	\gamma I_p&-\wh\Delta^{(3)}&I_p&\cdots&\0&\0&\0\\
	\vdots&\vdots&\vdots&\vdots&\vdots&\vdots &\vdots\\
	\0&\0&\0&\cdots&\gamma I_p&-\wh\Delta^{(M)}&I_p
\end{bmatrix}\in \mR^{q\times q}.
\end{align}
Then, we should have $ \wh\Omega \widehat \theta^* = \alpha\widehat \Sigma_{xy}^* $ or equivalently $\widehat \theta^* = \alpha \wh\Omega^{-1} \widehat \Sigma_{xy}^*  $, provided $ \wh\Omega $ is invertible.
Consequently, whether $ \wh\Omega $ is invertible or not determines the existence of the stable solution.
Then, in the following theorem, we elaborate the conditions to ensure its invertibility in probability.

\begin{theorem} {\sc (Stable Solution)} \label{thm:invertible}
Assume $ \Sigma_{xx}\in \mR^{p\times p} $ is a positive definite matrix with eigenvalues $ \lambda_1 \ge \lambda_2\ge \cdots \ge \lambda_p >0 $.
Further assume that $ \gamma\ne 1 $, $ \alpha \ne 0 $, and $ \alpha \ne 2(1+\gamma)/\lambda_j $ for each $ 1\le j \le p $.
Then, $ \wh\Omega$ is invertible with probability tending to one as $ n\to \infty $.
\end{theorem}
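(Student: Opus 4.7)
The plan is to reduce invertibility of the random matrix $\wh\Omega$ to invertibility of a deterministic limit that turns out to be block circulant, and then to diagonalize that limit by the discrete Fourier transform. First I would invoke the law of large numbers to conclude that $\wh\Sigma_{xx}^{(m)} \to \Sigma_{xx}$ in probability for each fixed $m$, so every $\wh\Delta^{(m)}$ converges in probability to the common limit $\Delta := (1+\gamma)I_p - \alpha\Sigma_{xx}$. Consequently $\wh\Omega$ converges in probability (entrywise) to the matrix $\Omega$ obtained from \eqref{Omega_hat} by replacing every $\wh\Delta^{(m)}$ with $\Delta$. Because $\det(\cdot)$ is continuous, it will suffice to prove $\det(\Omega)\ne 0$; the conclusion $P(\wh\Omega \text{ invertible}) \to 1$ then follows by the continuous mapping theorem.

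The key structural observation is that, once all the $\wh\Delta^{(m)}$ collapse to a single $\Delta$, the matrix $\Omega$ is \emph{block circulant}: in block-index notation, $\Omega_{mn} = C_{(n-m)\bmod M}$ with $C_0 = I_p$, $C_{M-2} = \gamma I_p$, $C_{M-1} = -\Delta$, and the remaining $C_j = \0$. The standard block Fourier diagonalization then gives $\det(\Omega) = \prod_{k=0}^{M-1}\det(D_k)$ with $D_k = \sum_{j=0}^{M-1} \omega^{jk} C_j$ and $\omega = e^{2\pi i/M}$. Writing $z_k = \omega^{-k}$ and using the factorization $1 - (1+\gamma)z + \gamma z^2 = (1-z)(1-\gamma z)$, a short reduction yields
\[
D_k \;=\; (1-z_k)(1-\gamma z_k)\, I_p \,+\, \alpha z_k\, \Sigma_{xx}.
\]
Simultaneously diagonalizing $\Sigma_{xx}$, the eigenvalues of $D_k$ are $\mu_{k,j} = (1-z_k)(1-\gamma z_k) + \alpha z_k \lambda_j$ for $j=1,\dots,p$.

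What remains is a case analysis on the location of $z_k$ on the unit circle. For $z_0 = 1$, $\mu_{0,j} = \alpha\lambda_j$, nonzero by $\alpha\ne 0$ and $\lambda_j>0$. For $z_k = -1$ (which arises only when $M$ is even and $k = M/2$), $\mu_{M/2,j} = 2(1+\gamma)-\alpha\lambda_j$, nonzero by the hypothesis $\alpha \ne 2(1+\gamma)/\lambda_j$. For all remaining $k$, $z_k = e^{i\phi}$ with $\sin\phi\ne 0$; setting $\mu_{k,j}=0$ and dividing by $z_k$ rearranges to $\alpha\lambda_j = (1+\gamma) - (z_k^{-1} + \gamma z_k)$, whose imaginary part equals $-(\gamma-1)\sin\phi$ and is nonzero by $\gamma\ne 1$, contradicting the reality of $\alpha\lambda_j$. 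This exhausts all cases, so $\det(\Omega)\ne 0$.

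The main obstacle I anticipate is not any single calculation but the structural insight itself: directly tackling the $Mp\times Mp$ determinant of $\wh\Omega$ seems hopeless, and only after passing to the limit and exploiting block circularity does the problem reduce to $M$ independent $p\times p$ eigenvalue conditions. It is also reassuring that the three hypotheses on $(\alpha,\gamma,\{\lambda_j\})$ each rule out exactly one of the three failure modes for $z_k$, suggesting that the stated conditions are essentially sharp.
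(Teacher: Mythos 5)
Your proposal is correct and follows essentially the same route as the paper: pass to the deterministic limit $\Omega$ via the law of large numbers and continuity of the determinant, then exploit the block circulant structure to reduce $\det(\Omega)\ne 0$ to the scalar conditions $(1-z_k)(1-\gamma z_k)+\alpha z_k\lambda_j\ne 0$ at the $M$-th roots of unity, with the same three-way case split on $z_k\in\{1,-1\}$ versus non-real $z_k$ (the paper phrases the non-real case via the product of roots of the real quadratic, $|\omega_m|^2=\gamma\ne1$, rather than taking imaginary parts, but the two arguments are equivalent).
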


The proof of Theorem \ref{thm:invertible} can be found in Appendix A.1.
By Theorem \ref{thm:invertible}, we know that the stable solution does exist for a wide range of different $ \alpha $ and $ \gamma $ values.
In fact, as long as  $ \gamma\ne 1 $, $ \alpha \ne 0 $, and $ \gamma \ne 2(1+\gamma)/ \lambda_j $ for every $ 1\le j\le p  $, we should have $ \wh\Omega $ invertible with high probability, and thus, the stable solution should exist.
In particular, we find that the stable solution exists as $ \gamma\to \infty $.
Nevertheless, we should remark that the existence of the stable solution and the FMGDM estimator might be totally different.
By the FMGDM estimator, we mean that the FMGDM algorithm should converge to a numerical limit $ \widehat{\theta}^{(m)} $, and this numerical limit $ \widehat{\theta}^{(m)} $ is referred to as the FMGDM estimator.
Obviously, as long as such a numerical limit does exist, it must be a stable solution to \eqref{dynamical system}.
Nevertheless, this does not imply that every possible stable solution can be numerically approached by the FMGDM algorithm.
In fact, there do exist stable solutions (e.g., those stable solutions associated with very large $ \gamma $ values) that cannot be achieved using the FMGDM algorithm.
Consequently, it is not a FMGDM estimator.
Thus, we know that the existence of the stable solution should be a necessary condition for the existence of the FMGDM estimator but might not be sufficient.
Accordingly, we investigate the conditions to assure the numerical convergence of the FMGDM algorithm in the next subsection.

\csubsection{Numerical Convergence Properties}

We devote this subsection to studying the second important problem, that is, the numerical convergence properties of the FMGDM algorithm.
Specifically, we are interested in studying how the $ \alpha $ and $ \gamma $ would affect the numerical convergence of the FMGDM algorithm.
Recall that $ \lambda_j\ (1\le j \le p)$ is the $ j $-th largest eigenvalue of $\Sigma_{xx} $.
The main results are summarized in the following theorem.

\begin{theorem}\label{thm:convergence}
{ \sc (Numerical Convergence)}
(a) If $ 0\le \gamma <1 $ and $0<\alpha< 2(1+\gamma)/\lambda_{1}$ and assuming $ n\to \infty $, then, we have $ \widehat{\theta}^{(t, m)} \to \widehat{\theta}^{(m)}$ as $ t\to \infty $ holds with probability tending to one.
(b) Assume $ \widehat{\theta}^{(0, m)} \neq   \widehat{\theta}^{(m)}$ and $ n\to \infty $. For $ \gamma>1 $ or $ \alpha> 2(1+\gamma)/\lambda_1 $, we have $ \widehat{\theta}^{(t, m)} \not\to \widehat{\theta}^{(m)}$ as $ t\to \infty $ holds with probability tending to one.
\end{theorem}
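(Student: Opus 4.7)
The plan is to reduce both parts of the theorem to a single spectral-radius question for the one-epoch transition matrix. Subtracting the fixed-point identity satisfied by the stable solution (i.e., the $m$-th row of the dynamical system for $\widehat{\theta}^{(m)}$) from the FMGDM update yields the linear homogeneous error recursion
\[
e^{(t,m)} = \widehat{\Delta}^{(m)} e^{(t,m-1)} - \gamma\, e^{(t,m-2)}, \qquad e^{(t,m)} := \widehat{\theta}^{(t,m)} - \widehat{\theta}^{(m)}.
\]
Lifting to the companion state $z^{(t,m)} = \big(e^{(t,m)\top},\, e^{(t,m-1)\top}\big)^\top \in \mathbb{R}^{2p}$ turns this into a first-order recursion $z^{(t,m)} = A_m\, z^{(t,m-1)}$ with
\[
A_m = \begin{pmatrix} \widehat{\Delta}^{(m)} & -\gamma I_p \\ I_p & 0 \end{pmatrix}.
\]
Chaining $A_m$ through an entire epoch and using $z^{(t,0)} = z^{(t-1,M)}$ produces the one-epoch map $z^{(t,M)} = B\, z^{(t-1,M)}$ with $B = A_M A_{M-1}\cdots A_1$, so convergence of $\widehat{\theta}^{(t,m)}$ to $\widehat{\theta}^{(m)}$ for every $m$ is equivalent (via back-substitution in the within-epoch recursion) to $B^t z^{(0,M)} \to 0$, which is governed by $\rho(B)$.

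For part (a), I would pass to the population limit through the weak law of large numbers: $\widehat{\Sigma}_{xx}^{(m)} \to \Sigma_{xx}$ in probability, so $A_m \to A_0$, where $A_0$ is built from $\Delta_0 = (1+\gamma)I_p - \alpha \Sigma_{xx}$. Simultaneous block-diagonalization in the orthonormal eigenbasis of $\Sigma_{xx}$ shows that the spectrum of $A_0$ is exactly the collection of $2p$ roots of
\[
\mu^2 - (1+\gamma - \alpha \lambda_j)\mu + \gamma = 0, \qquad j = 1,\dots,p.
\]
Applying the Schur/Jury criterion ($p(1)>0$, $p(-1)>0$, $|c|<1$ with $c$ the constant term) to each quadratic, both roots lie in the open unit disk iff $\gamma<1$, $\alpha\lambda_j>0$, and $\alpha\lambda_j < 2(1+\gamma)$; the most restrictive constraint over $j$ collapses to the hypotheses $0\le \gamma <1$ and $0<\alpha < 2(1+\gamma)/\lambda_1$. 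Hence $\rho(A_0)<1$, so $\rho(A_0^M)=\rho(A_0)^M<1$. Since matrix multiplication is continuous and the eigenvalues of a matrix depend continuously on its entries (as roots of the characteristic polynomial), $B \to A_0^M$ in probability and $\rho(B)\to \rho(A_0)^M$ in probability, so $\rho(B)<1$ with probability tending to one. On this event $\|B^t\|\to 0$, giving $z^{(t,M)}\to 0$, and the within-epoch recursion then forces $e^{(t,m)}\to 0$ for every $m$.

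For part (b), a parallel calculation on the same quadratic produces an unstable eigenvalue. If $\gamma>1$, the Vieta product of the two roots equals $\gamma>1$, so $\max_k|\mu_k|>1$. If $\alpha > 2(1+\gamma)/\lambda_1$, then at $j=1$ the quantity $p(-1)=2(1+\gamma)-\alpha\lambda_1$ is strictly negative, so a real root lies strictly below $-1$. Either way $\rho(A_0)>1$, and by the same continuity argument $\rho(B)>1$ with probability tending to one. Picking a left eigenvector $u$ of $B$ associated with an eigenvalue $\mu$ with $|\mu|>1$, the scalar $\xi_t := u^\top z^{(t,M)}$ satisfies $\xi_t = \mu^t \xi_0$, so $\|z^{(t,M)}\|\to\infty$ (and in particular $z^{(t,M)}\not\to 0$) as soon as $\xi_0\ne 0$. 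The main obstacle lies in exactly this non-degeneracy: one must rule out the coincidence that the initial error $z^{(0,M)}$ falls entirely in the stable invariant subspace of $B$. Since the initial iterate $\widehat{\theta}^{(0,m)}$ is specified independently of the random data while $u$ and $\widehat{\theta}^{(m)}$ depend continuously on $\{\widehat{\Sigma}_{xx}^{(m)},\widehat{\Sigma}_{xy}^{(m)}\}$, the event $\{\xi_0=0\}$ is a proper algebraic subvariety in the data whose probability vanishes in the $n\to\infty$ limit. Combined with the non-triviality assumption $\widehat{\theta}^{(0,m)}\ne \widehat{\theta}^{(m)}$, this yields $\xi_0\ne 0$ with probability tending to one and completes the contrapositive.
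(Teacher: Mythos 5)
Your proposal is correct and its architecture coincides with the paper's: the same error recursion, the same $2p\times 2p$ companion lifting (your $A_m$ is the paper's $D^{(m)}$, your $B$ its epoch product $C^{(M)}$), the same passage to the population companion matrix via the law of large numbers, and the same continuity-of-the-spectral-radius argument. The two places you genuinely diverge are worth noting. First, to locate the spectrum of the population matrix you block-diagonalize along the eigenbasis of $\Sigma_{xx}$ and apply the Jury/Schur test to the per-eigenvalue quadratics $\mu^2-(1+\gamma-\alpha\lambda_j)\mu+\gamma$; the paper's Lemma 2 reaches the same conclusions by a direct case analysis on real versus complex eigenvalues (using $\gamma/\xi+\xi\in\mR$ and Weyl's inequality). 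These are equivalent, and your route is arguably cleaner and makes the necessity direction ($\gamma>1$ via Vieta, $\alpha>2(1+\gamma)/\lambda_1$ via $p(-1)<0$) immediate. Second, and more substantively, in part (b) you correctly observe that $\rho(B)>1$ only gives $B^t\not\to 0$ as a matrix, which does not by itself preclude $B^t z^{(0,M)}\to 0$ for the particular initial error at hand; the paper's proof passes over this point and simply asserts the conclusion from $\delta^{(0,m)}\ne \0$. Your genericity patch (that $u^\top z^{(0,M)}=0$ confines the data to a proper subvariety) is the right idea but is itself only sketched: one needs the data distribution to be continuous and the variety to be proper for each $n$, neither of which is established. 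So on this point your argument is more careful than the paper's, though not yet fully airtight.
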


The proof of Theorem \ref{thm:convergence} can be found in Appendix A.2. Theorem \ref{thm:convergence} provides nearly sufficient and necessary conditions for numerical convergence.
Specifically, from Theorem \ref{thm:convergence}(a), we know that to ensure the numerical convergence, we must have $0 \le \gamma <1$.
 This corroborates the common practice in the existing literature very well \citep{sutskever2013importance}.
Otherwise, by Theorem \ref{thm:convergence}(b), we know that the numerical convergence can never be achieved if $ \gamma>1 $.
Consequently, the FMGDM estimator does not exist in this case.
Furthermore, with a given momentum parameter $0\le\gamma<1$, the learning rate $ \alpha $ should be sufficiently small.
Otherwise, the algorithm should not converge.
Remarkably, the upper bound of the learning rate $ \alpha $ is related to the largest eigenvalue of $ \Sigma_{xx} $ (i.e., $ \lambda_{1} $) and the momentum parameter $ \gamma $.
This upper bound should be larger if the momentum parameter $ \gamma $ is larger.
As a result, this enables a larger learning rate (thus faster convergence) to be used for a FMGDM algorithm compared to a standard GD algorithm ($ \gamma=0 $).
Theorem \ref{thm:convergence} examines the conditions for the numerical convergence for the FMGDM algorithm.
However, it does not provide the exact convergence rate.
In particular, we want to understand how the $ \alpha $ and $ \gamma $ specification would affect the numerical convergence rate.
We next develop the following theorem.

\bet\label{thm:convergence_speed}
{\sc (Numerical Convergence Speed)}
If $ 0\le \gamma <1 $ and $0<\alpha< 2(1+\gamma)/\lambda_{1}$,
then there exists a constant $ \rho \in[\gamma,1) $, such that $ \|\widehat{\theta}^{(t, m)} - \widehat{\theta}^{(m)} \| \le (\rho^M + \zeta_n +\eta_t)^t \big(\|\widehat{\theta}^{(0, m)} - \widehat{\theta}^{(m)} \| + \|\widehat{\theta}^{(0, m-1)} - \widehat{\theta}^{(m-1)} \|\big) $ for each $ 1\le m\le M $, where $ \zeta_n \to_p 0 $ as $ n \to \infty $ and $ \eta_t \to 0 $ as $ t\to \infty $.
Furthermore, we have
\begin{enumerate}[(a)]

\item The minimal (i.e., optimal) $ \rho $ is given by $ \rho = (\sqrt{\lambda_{1}/\lambda_{p}} - 1) / (\sqrt{\lambda_{1}/\lambda_{p}} + 1)$, when $ \alpha = 4/(\sqrt{\lambda_{1} }+ \sqrt{\lambda_p})^2 $ and $ \gamma = (\sqrt{\lambda_{1}} - \sqrt{\lambda_p})^2 / (\sqrt{\lambda_{1}} + \sqrt{\lambda_p})^2 $.

\item If $ \gamma=0 $, we should have $ \rho =   \max\{|1-\alpha\lambda_1|,  |1-\alpha\lambda_p| \} $.
In particular, the minimal (i.e., optimal) $ \rho $ is given by $ \rho = (\lambda_1/\lambda_{p} -1 ) /  (\lambda_1/\lambda_{p} +1)$, when $ \alpha = 2/ (\lambda_{1}+\lambda_p) $.

\item Assume $ 0<\alpha< 1/\lambda_1$.
(c.1) If $ \gamma = 0 $, then the $ \rho $ is given by $\rho =  1-\alpha \lambda_p $.
(c.2) If $0<  \gamma <(1-{\alpha \lambda_p})^2 $, then $\rho < 1-\alpha \lambda_p$.
In particular, the minimal (i.e., optimal) $ \rho $ is given by $\rho = 1-\sqrt{\alpha \lambda_p} $, when $ \gamma = (1-\sqrt{\alpha \lambda_p})^2$.


\end{enumerate}

\eet

The proof of Theorem \ref{thm:convergence_speed} is shown in Appendix A.3.
From Theorem \ref{thm:convergence_speed}, we know that the FMGDM algorithm should enjoy a linear convergence rate by choosing appropriate $ \alpha $ and $ \gamma $, as long as both $ n $ and $ t $ are large enough.
By Theorem \ref{thm:convergence_speed} (a), we know that the optimal convergence factor $ \rho =  (\sqrt{\lambda_{1}/\lambda_{p}} - 1) / (\sqrt{\lambda_{1}/\lambda_{p}} + 1)$ can be achieved by carefully specified $ \alpha $ and $ \gamma $.
This specification is closely related to the eigenvalue structure of the covariance matrix $ \Sigma_{xx} $.
Our findings are in line with theoretical findings in the classical GDM method without minibatch \citep{polyak1964}. It is known that this convergence factor also attains the  achievable lower bound of various first-order methods \citep{nesterov2018lectures}. 
Moreover, Theorem \ref{thm:convergence_speed} (b) claims that,  for a standard GD algorithm with $ \gamma=0 $, the best convergence factor should be $ \rho  = (\lambda_1/\lambda_{p} -1 ) /  (\lambda_1/\lambda_{p} +1)$.
This represents a convergence rate that is strictly slower than that of the FMGDM algorithm as long as $ \lambda_1>\lambda_p $.

Although the optimal numerical convergence can be guaranteed by (a) and (b) under different circumstances, the stable solution is not necessarily optimal in statistical efficiency.
For most SGD-related methods, the decaying learning rate $\alpha$ is typically needed to improve statistical efficiency \citep{robbins1951stochastic, polyak1992acceleration}.
In this case, from Theorem \ref{thm:convergence_speed} (c.1) we know that the convergence factor in this case tends to $ 1 $ as $ \alpha \to 0$, which represents a very slow convergence rate.
Moreover, from Theorem \ref{thm:convergence_speed} (c.2) we find that by choosing an appropriate $ \gamma $, the convergence factor of the FMGDM could be strictly smaller than that of a standard GD algorithm ($ \gamma=0 $).
This finding suggests that the momentum term can indeed accelerate the numerical convergence.
In addition, Theorem \ref{thm:convergence_speed} (c.2) discovers a very interesting and novel interactive relationship between the learning rate $ \alpha $ and the momentum parameter $ \gamma $. Specifically, it suggests that $ \gamma $ should approach $ 1 $ as $ \alpha\to 0 $. This result theoretically explains why practitioners often like to specify $ \gamma $ to be a value very close to $ 1 $ to strike a balance between both numerical and statistical convergence \citep{sutskever2013importance}.

\csubsection{The Statistical Properties of the Stable Solution}

In this subsection, we investigate how the tuning parameters $ \alpha$ and $ \gamma $ would affect the statistical properties of the stable solution.
Note that the OLS estimator is the global optimal solution to our problem.
Ideally, we should have the resulting FMGDM estimator stay with the OLS estimator as closely as possible.
Consequently, we should examine the relationship between the stable solution and the OLS estimator.
To this end, we need to analyze $ \wh\Omega $ and its inverse (assume it exists) carefully.
Accordingly, we decompose it into $ \wh\Omega = A + \alpha \wh B $, where
\begin{align}
A = \begin{bmatrix}
	I_p&\0&\0&\cdots&\0&\gamma I_p&-(1+\gamma)I_p\\
	-(1+\gamma)I_p&I_p&\0&\cdots&\0&\0&\gamma I_p\\
	\gamma I_p&-(1+\gamma)I_p&I_p&\cdots&\0&\0&\0\\
	\vdots&\vdots&\vdots&\vdots&\vdots &\vdots&\vdots\\
	\0&\0&\0&\cdots&\gamma I_p&-(1+\gamma)I_p&I_p
\end{bmatrix}\in \mR^{q\times q} \label{A},
\end{align}
and
\begin{align}
\wh B = \begin{bmatrix}
	\0&\0&\cdots&\0&\0& \widehat\Sigma_{xx}^{(1)}\\
	\widehat\Sigma_{xx}^{(2)}&\0&\cdots&\0&\0&\0\\
	\vdots&\vdots&\vdots&\vdots&\vdots &\vdots\\
	\0&\0&\cdots&\0& \widehat\Sigma_{xx}^{(M)}&\0
\end{bmatrix}\in \mR^{q \times q} \label{B}.
\end{align}
Note that $ A $ and $ \wh B $ do not involve the learning rate $ \alpha $.
Furthermore, it can be easily verified that $ A^\top I^* = AI^* = \0 $, where $ I^* = \boldsymbol{1}_M\otimes I_p $.
Thus, the columns of $ I^* $ are the eigenvectors of $ A $ and $ A^\top $ corresponding to the eigenvalue $ 0 $. Next, let $ P_1 = I^* / \sqrt{M} \in \mR^{q\times p}$.
Then, we can construct $ P_2 \in \mR^{q\times (q-p)}$ that has orthonormal columns and is orthogonal to $ P_1 $ (i.e., $ P_2^\top P_2 = I_p,\ P_2^\top P_1 = \0 $).
We then have $ P = [P_1, P_2] \in \mR^{q \times q}$ as an orthogonal matrix.
Simple algebra gives $ P^\top A P = [\0, \0; \0, P_2^\top A P_2] $.
Consequently, we obtain that
\begin{align}\label{P Omega P}
P^\top \wh\Omega P = P^\top A P + \alpha P^\top \wh B P = \begin{bmatrix}
	\alpha P_1^\top \wh B P_1&\alpha P_1^\top \wh B P_2\\
	\alpha P_2^\top \wh B P_1&P_2^\top A P_2 + \alpha P_2^\top \wh B P_2
\end{bmatrix}.
\end{align}
Denote $\wh B_{11}^P= P_1 ^\top \wh B P_1  $, $\wh B_{12}^P= P_1 ^\top \wh B P_2  $, $\wh B_{21}^P= P_2 ^\top \wh B P_1  $, $\wh B_{22}^P= P_2 ^\top \wh B P_2  $, and $A_{22}^P= P_2 ^\top A P_2  $. Then, \eqref{P Omega P} is simplified to $ P^\top \wh\Omega P = [\alpha \wh B_{11}^P, \alpha \wh B_{12}^P;\, \alpha \wh B_{21}^P, A_{22}^P + \alpha \wh B_{22}^P] $. With the help of this transformation, we can obtain the following theorem, which elaborates the difference between the stable solution and the OLS estimator.

\begin{theorem} \label{thm:ss}
	{\sc (Relationship with the OLS Estimator)}
	Suppose the conditions in Theorem \ref{thm:invertible} hold.
	Further denote $ \widehat{\theta}_\text{ols}^* = \1_M\otimes  \widehat{\theta}_\text{ols} $. Then we have
	\begin{enumerate}[(a)]
		\item If $ \gamma\ge 0 $ is fixed and $ \alpha \to 0$, then $ \widehat{\theta}^*  = \widehat{\theta}_\text{ols}^*+\alpha \widehat{E} \{1+O_p(\alpha)\}$, where $ \widehat{E} =  \big(P_1 \widehat{\Sigma}_{xx} ^{-1}P_1^\top \wh B - I_q \big)  \big\{P_2 (A_{22}^P)^{-1} P_2^\top \big\} \big(\wh B \widehat{\theta}_\text{ols}^*-\widehat{\Sigma}_{xy}^*  \big)$.
		\item If $ \alpha>0 $ is fixed and $ \gamma \to \infty $ , then $ \widehat{\theta}^* = \widehat{\theta}_\text{ols}^*+\alpha \widehat{E} \{1+O_p( \|(A_{22}^P)^{-1}\|  )\}$, where $ \widehat{E} $ is the same as that in (a) and $ \|(A_{22}^P)^{-1}\|\to 0  $ as $ \gamma \to \infty $.
	\end{enumerate}

\end{theorem}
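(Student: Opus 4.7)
The plan is to subtract off the OLS contribution and then analyze the reduced equation via a Schur--complement perturbation in the rotated basis $P=[P_1,P_2]$. Since $\widehat\theta_{\text{ols}}^*=\1_M\otimes\widehat\theta_{\text{ols}}$ lies in the range of $I^*$ and $AI^*=\0$, we have $A\widehat\theta_{\text{ols}}^*=\0$, so subtracting $\alpha\wh B\widehat\theta_{\text{ols}}^*$ from both sides of $\wh\Omega\widehat\theta^*=\alpha\widehat\Sigma_{xy}^*$ gives
\begin{equation*}
\wh\Omega\,\widehat\Delta=\alpha\widehat r,\qquad \widehat\Delta:=\widehat\theta^*-\widehat\theta_{\text{ols}}^*,\qquad \widehat r:=\widehat\Sigma_{xy}^*-\wh B\,\widehat\theta_{\text{ols}}^*.
\end{equation*}
The essential preliminary fact is that $P_1^\top\widehat r=\0$: direct computation gives $P_1^\top\widehat\Sigma_{xy}^*=\sqrt{M}\,\widehat\Sigma_{xy}$ and $P_1^\top\wh B\,\widehat\theta_{\text{ols}}^*=\sqrt{M}\,\widehat\Sigma_{xx}\widehat\theta_{\text{ols}}=\sqrt{M}\,\widehat\Sigma_{xy}$ by the OLS normal equations, so $\widehat r$ lives entirely in the range of $P_2$.

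Next, decompose $\widehat\Delta=P_1\widehat\Delta_1+P_2\widehat\Delta_2$ with $\widehat\Delta_j:=P_j^\top\widehat\Delta$ and premultiply $\wh\Omega\widehat\Delta=\alpha\widehat r$ by $P_1^\top$ and $P_2^\top$. Using $P_1^\top A=\0$, $AP_1=\0$, $P_2^\top AP_1=\0$, and $P_1^\top\wh B P_1=\widehat\Sigma_{xx}$, the result is the coupled system
\begin{align*}
\alpha\widehat\Sigma_{xx}\widehat\Delta_1+\alpha\wh B_{12}^P\widehat\Delta_2 &= \0,\\
\alpha\wh B_{21}^P\widehat\Delta_1+\bigl(A_{22}^P+\alpha\wh B_{22}^P\bigr)\widehat\Delta_2 &= \alpha P_2^\top\widehat r.
\end{align*}
The first equation solves (independent of $\alpha$) as $\widehat\Delta_1=-\widehat\Sigma_{xx}^{-1}\wh B_{12}^P\widehat\Delta_2$, and substituting into the second eliminates $\widehat\Delta_1$ to give
\begin{equation*}
\bigl(A_{22}^P+\alpha\wh B_{22}^P-\alpha\wh B_{21}^P\widehat\Sigma_{xx}^{-1}\wh B_{12}^P\bigr)\widehat\Delta_2=\alpha P_2^\top\widehat r.
\end{equation*}

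For part (a), with $\gamma$ fixed and $\alpha\to 0$, the operator on the left is $A_{22}^P+O(\alpha)$, so a Neumann expansion of its inverse gives $\widehat\Delta_2=\alpha(A_{22}^P)^{-1}P_2^\top\widehat r\,\{1+O_p(\alpha)\}$. Re-assembling via $\widehat\Delta=P_1\widehat\Delta_1+P_2\widehat\Delta_2=(I_q-P_1\widehat\Sigma_{xx}^{-1}P_1^\top\wh B)\,P_2\widehat\Delta_2$ and using $P_2P_2^\top\widehat r=\widehat r$ together with $\widehat r=-(\wh B\widehat\theta_{\text{ols}}^*-\widehat\Sigma_{xy}^*)$ reproduces exactly $\widehat\Delta=\alpha\widehat E\{1+O_p(\alpha)\}$ in the stated form.

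For part (b) the same Schur identity holds with $\alpha$ fixed, but the small parameter switches to $\|(A_{22}^P)^{-1}\|$: factoring $(A_{22}^P)^{-1}$ out of $A_{22}^P+\alpha\wh B_{22}^P-\alpha\wh B_{21}^P\widehat\Sigma_{xx}^{-1}\wh B_{12}^P$ exposes a factor $I+(A_{22}^P)^{-1}\cdot O_p(1)$, and the Neumann argument then delivers $\widehat\Delta=\alpha\widehat E\{1+O_p(\|(A_{22}^P)^{-1}\|)\}$. The main obstacle is therefore establishing $\|(A_{22}^P)^{-1}\|\to 0$ as $\gamma\to\infty$. I would obtain this from the block-circulant structure of $A$: writing $A=C(\gamma)\otimes I_p$, where $C(\gamma)$ is the $M\times M$ circulant generated by $(1,-(1+\gamma),\gamma,0,\ldots,0)$, the eigenvalues of $C(\gamma)$ factor as $(1-\omega^j)(1-\gamma\omega^j)$ with $\omega=e^{2\pi\mathrm{i}/M}$ and $j=0,1,\ldots,M-1$. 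The mode $j=0$ contributes exactly the $p$ zero eigenvalues carried by the range of $P_1$, so $A_{22}^P$ inherits only modes $j\neq 0$, whose magnitudes grow like $\gamma$. Since circulant matrices are normal (and normality is inherited by the restriction to the invariant subspace $(\ker A)^\perp$), $\|(A_{22}^P)^{-1}\|$ equals the reciprocal of the smallest nonzero eigenvalue magnitude and is therefore $O(1/\gamma)\to 0$, which closes the argument.
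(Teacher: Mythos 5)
Your proof is correct, and it takes a genuinely different route from the paper's. The paper works with the full inverse: it applies the block-matrix inversion formula to $P^\top\wh\Omega P$, Taylor-expands every block ($\wh\Omega_{11}^*,\wh\Omega_{12}^*,\wh\Omega_{21}^*,\wh\Omega_{22}^*$) to second order in $\alpha$, reassembles $\wh\Omega^{-1}=P\wh\Omega^*P^\top$, and only then multiplies by $\alpha\widehat\Sigma_{xy}^*$, identifying $\alpha^{-1}\wh H_1\cdot\alpha\widehat\Sigma_{xy}^*=\widehat\theta_{\text{ols}}^*$ as the leading term. You instead subtract the OLS contribution at the outset, using $A\widehat\theta_{\text{ols}}^*=\0$ and the normal equations to get the exact reduced equation $\wh\Omega\widehat\Delta=\alpha\widehat r$ with $P_1^\top\widehat r=\0$; the exact relation $\widehat\Delta_1=-\widehat\Sigma_{xx}^{-1}\wh B_{12}^P\widehat\Delta_2$ then collapses everything to a single Schur-complement equation in $\widehat\Delta_2$ that needs only a first-order Neumann expansion. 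This is cleaner and avoids tracking the higher-order cross terms the paper has to carry. The comparison is most interesting for part (b): the paper justifies $\|(A_{22}^P)^{-1}\|\to 0$ by computing $\|A_{22}^P\|_F^2=\{(1+\gamma)^2+\gamma^2+1\}q$, which shows $A_{22}^P$ is large in Frobenius norm but does not by itself control the smallest singular value. Your argument — the circulant eigenvalue factorization $(1-\omega^j)(1-\gamma\omega^j)$, the observation that the $j=0$ mode carries exactly $\ker A=\mathrm{range}(P_1)$, and normality of $A=\wt A\otimes I_p$ restricted to the invariant subspace $\mathrm{range}(P_2)$ so that $\|(A_{22}^P)^{-1}\|$ equals the reciprocal of the smallest nonzero eigenvalue magnitude, hence $O(1/\gamma)$ — actually closes this gap and gives an explicit rate. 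Both approaches rely on the same structural facts ($AP_1=A^\top P_1=\0$, invertibility of $A_{22}^P$, and $\wh B_{11}^P=\widehat\Sigma_{xx}$), so your derivation is fully compatible with the paper's setup.
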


The proof of Theorem \ref{thm:ss} can be found in Appendix A.4. With a fixed $ \gamma $, it can be verified that $ \sqrt{n}\wh E = O_p(1) $  as $ n\to \infty $. Then by Theorem \ref{thm:ss}, we should have $\alpha\sqrt n \widehat{E} \to_p 0 $ if $ \alpha\to 0 $.
Hence it can be ignored since we have $\sqrt n (\widehat \theta_\text{ols} - \theta_0) = O_p(1)$.
With a fixed $ \alpha $, it can be verified that $ (A_{22}^P)^{-1}  \to 0$ as $\gamma \to \infty $.
This further implies that $ \sqrt n \widehat{E}\to_p 0 $ as $ \gamma \to \infty $.
Consequently, we should have $ \widehat{\theta}^{(m)} \to \widehat{\theta}_\text{ols} $ for $ 1\le m \le M $ as long as $ \alpha\to 0 $ or $ \gamma \to \infty $.
This immediately suggests two different ways to force the stable solution to stay close to the global OLS estimator.
They are, respectively, $ \alpha\to 0 $ and $ \gamma\to\infty $.
However, from Theorem \ref{thm:convergence} (b), we know that the FMGDM algorithm does not converge if $ \gamma>1 $.
Consequently, this leaves us only one way to obtain the best statistical efficiency, which is forcing $ \alpha\to0 $.
On the other hand, if $ \alpha $ is set to be too small, then the FMGDM algorithm would converge painfully slowly.
Thus, practitioners have to consider relatively large $ \alpha $ values at the early stage of the algorithm for faster numerical convergence.
Then we turn to smaller values for improved statistical efficiency.
Moreover, as suggested by Theorem \ref{thm:convergence_speed} (c.2), we can choose a momentum parameter close to $ 1 $ to accelerate the numerical convergence.

To further investigate the statistical property of the stable solution, we establish its asymptotic normality for fixed $ \gamma $ in the following theorem.

\begin{theorem} \label{thm:asy_normality}
	{\sc (Asymptotic Normality)}
Suppose the conditions in Theorem \ref{thm:invertible} hold.
Further assume that $ \gamma\ge 0 $ is fixed and $ \alpha \to 0 $. This yields
	\begin{equation*}
		\sqrt{N} \big(\wh \theta^{(m)} - \theta_0 + \wh b_m(\alpha) \big) \to_d N_p\Big(\0, \sigma^2 V_m(\alpha) \Big)
	\end{equation*} as $ N\to \infty $, where $ \wh b_m(\alpha) = O_p(\alpha^2 n^{-3/2} ) $, $ V_m(\alpha) = \Sigma_{xx}^{-1} +\alpha^2 M d_\gamma^2 \Sigma_{xx} +O(\alpha^4)$, and $ d_\gamma $ is the $ l_2 $-norm of the first row of the matrix $ P_2 (A_{22}^P)^{-1} P_2^\top $.
\end{theorem}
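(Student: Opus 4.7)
\textbf{Proof plan for Theorem \ref{thm:asy_normality}.}
The starting point is the expansion from Theorem \ref{thm:ss}(a): $\widehat{\theta}^{(m)} = \widehat{\theta}_{\text{ols}} + \alpha \widehat{E}_m + \alpha^2 \widehat{R}_m$, where $\widehat{E}_m$ denotes the $m$-th $p$-block of $\widehat{E}$ and $\widehat{R}_m = O_p(\|\widehat{E}_m\|)$ absorbs the multiplicative $\{1 + O_p(\alpha)\}$ factor. The strategy is to isolate the leading stochastic part of $\widehat{E}_m$, establish joint asymptotic normality of $\sqrt{N}(\widehat{\theta}_{\text{ols}} - \theta_0)$ and $\alpha\sqrt{N}\widehat{E}_m$, and sweep higher-order terms into $\widehat{b}_m(\alpha)$. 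The first algebraic simplification is to observe that the vector $u := \widehat{B}\widehat{\theta}_{\text{ols}}^* - \widehat{\Sigma}_{xy}^*$ has $k$-th block equal to $\dot{\mL}_{(k)}(\widehat{\theta}_{\text{ols}})$, and the global OLS normal equation $\sum_k \dot{\mL}_{(k)}(\widehat{\theta}_{\text{ols}})=0$ yields $P_1^\top u = 0$. Combined with $P_1^\top \widehat{B} P_2 = O_p(n^{-1/2})$ (since $P_1^\top \widehat{B} - \Sigma_{xx}P_1^\top = O_p(n^{-1/2})$ and $P_1^\top P_2 = 0$), this collapses $\widehat{E}$ to $-P_2(A_{22}^P)^{-1}P_2^\top u$ plus an $O_p(n^{-1})$ remainder.

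Next I would exploit the Kronecker structure $A = \widetilde{A}\otimes I_p$, where $\widetilde{A}$ is the $M\times M$ circulant matrix with first row $(1, 0, \ldots, 0, \gamma, -(1+\gamma))$. Choosing $P_2 = \widetilde{P}_2\otimes I_p$ gives $P_2(A_{22}^P)^{-1}P_2^\top = \widetilde{G}\otimes I_p$, where $\widetilde{G}$ is the Moore--Penrose pseudoinverse of $\widetilde{A}$ on $\1_M^\perp$ and is itself circulant with scalar entries $\{\widetilde{g}_l\}_{l=0}^{M-1}$; hence $d_\gamma^2 = \sum_l \widetilde{g}_l^2$. Setting $w_k = n^{-1}\sum_{i\in\mS_{(k)}}X_i\varepsilon_i$ and $\bar{w} = M^{-1}\sum_k w_k$, the identity $\widehat{\theta}_{\text{ols}} - \theta_0 = \widehat{\Sigma}_{xx}^{-1}\bar{w}$ gives $u_k = \widehat{\Sigma}_{xx}^{(k)}(\widehat{\theta}_{\text{ols}} - \theta_0) - w_k = -(w_k - \bar{w}) + O_p(n^{-1})$, and therefore $\widehat{E}_m = \sum_k \widetilde{g}_{k-m}(w_k - \bar{w}) + O_p(n^{-1})$ with indices taken mod $M$. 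The classical CLT applied separately to the disjoint batches gives $\sqrt{n}\,w_k \to_d W_k \sim N(\0,\sigma^2\Sigma_{xx})$ independently across $k$, whence jointly $\sqrt{N}(\widehat{\theta}_{\text{ols}} - \theta_0) \to_d \Sigma_{xx}^{-1} M^{-1/2}\sum_k W_k$ and $\alpha\sqrt{N}\widehat{E}_m \to_d \alpha\sqrt{M}\sum_k \widetilde{g}_{k-m}(W_k - \bar{W})$.

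The two limits are asymptotically independent, because $\bar{W}$ and the deviations $\{W_k - \bar{W}\}$ are uncorrelated and jointly Gaussian. Using $\sum_k \widetilde{g}_{k-m} = 0$ (which follows from $\widetilde{G}\1_M = 0$, itself inherited from the null space of $\widetilde{A}$ being spanned by $\1_M$) and $\mathrm{Cov}(W_j - \bar{W}, W_k - \bar{W}) = \sigma^2\Sigma_{xx}(\delta_{jk}-M^{-1})$, the variance of the correction collapses to $\alpha^2 M \sigma^2 d_\gamma^2 \Sigma_{xx}$; adding the OLS variance $\sigma^2\Sigma_{xx}^{-1}$ produces the leading form of $V_m(\alpha)$. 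The main obstacle is the bookkeeping of remainder orders: the $O_p(n^{-1/2})$ contribution of $P_1^\top(\widehat{B}-B)P_2$, the $O_p(n^{-1})$ term in the expansion of $u_k$, and the $\alpha^2 \widehat{R}_m$ remainder from Theorem \ref{thm:ss}(a) must aggregate into a bias $\widehat{b}_m(\alpha) = O_p(\alpha^2 n^{-3/2})$ and a variance correction of order $O(\alpha^4)$ without disturbing the leading Gaussian limit; the Kronecker/circulant reduction itself is routine algebra once spotted.
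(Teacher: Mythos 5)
Your plan is sound and arrives at the correct limit, but it is organized differently from the paper's proof. The paper works directly with the identity $\widehat{\theta}^* = \alpha\,\wh\Omega^{-1}\widehat{\Sigma}_{xy}^*$, splits $\widehat{\Sigma}_{xy}^*$ into the mean part $\wh\Sigma_{xx}^*(\1_M\otimes\theta_0)$ and a noise part that is jointly Gaussian across batches, and then reads off the bias from the $\alpha$-expansion of $\wh\Omega^{-1}$ (equation \eqref{Omega_inv}) and the covariance from the deterministic limit $\Omega^{-1} = \alpha^{-1}H_1 + H_2 - \alpha H_2 B H_2 + O(\alpha^2)$, giving $V(\alpha) = M^{-1}(\1_M\1_M^\top)\otimes\Sigma_{xx}^{-1} + \alpha^2 (QQ^\top)\otimes\Sigma_{xx} + O(\alpha^4)$. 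You instead start from the Theorem \ref{thm:ss}(a) expansion around $\widehat{\theta}_{\text{ols}}$, use the OLS normal equations ($P_1^\top u = 0$) and $P_1^\top\wh B P_2 = O_p(n^{-1/2})$ to collapse $\wh E_m$ to $\sum_k Q_{mk}(w_k-\bar w)$, and then exploit the exact independence of $\bar W$ and the deviations $W_k-\bar W$ to split the variance into $\sigma^2\Sigma_{xx}^{-1}$ plus $\alpha^2 M\sigma^2 d_\gamma^2\Sigma_{xx}$ (your use of $Q\1_M=0$ to kill the $M^{-1}$ correction is exactly the paper's $\1_M^\top a = 0$). This is a legitimately more probabilistic reading of the same algebra, and it makes the source of the extra variance — correlated reuse of the batch score vectors — more transparent.

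The one place your bookkeeping falls short of the theorem as stated is the bias order. The multiplicative remainder $\{1+O_p(\alpha)\}$ in Theorem \ref{thm:ss}(a) contributes a term of size $O_p(\alpha^2 n^{-1/2})$ to $\widehat{\theta}^{(m)}$, not $O_p(\alpha^2 n^{-3/2})$; it is $o_p(N^{-1/2})$ once $\alpha\to 0$, so your distributional conclusion survives by Slutsky, but you cannot conclude $\wh b_m(\alpha) = O_p(\alpha^2 n^{-3/2})$ from that expansion alone. The paper obtains the finer order by isolating the conditional-mean part $\wh\mu(\alpha) = \alpha\wh\Omega^{-1}\wh\Sigma_{xx}^*(\1_M\otimes\theta_0)$ and observing that its $\alpha^2$-coefficient, $\wh H_1 \wh B H_2 \wh B \wh H_1 \wh B H_2\,\wh\Sigma_{xx}^*(\1_M\otimes\theta_0)$, reduces (using $\1_M\1_M^\top Q = \0$) to a triple product of centered matrices each of norm $O_p(n^{-1/2})$, hence $O_p(n^{-3/2})$. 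If you want the stated bias rate rather than just the limit law, you need this extra cancellation step.
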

The proof of Theorem \ref{thm:asy_normality} is shown in Appendix A.5 and the explicit formulas for calculating $ d_\gamma $ can be found in Appendix A.6. Theorem \ref{thm:asy_normality} gives the asymptotic behavior of the stable solution. Specifically, for fixed $ \gamma $ and $ M $, the stable solution is still $ \sqrt{N} $-consistent
as long as $\sqrt N\wh b_m(\alpha) = o_p(1)$.
In addition, from the bias and variance term, we can see that as $ \alpha \to 0 $, the stable solution should be asymptotically equivalent to the OLS estimator.

\csubsection{Shuffled Minibatch-Based Gradient Descent with Momentum}

As mentioned before, the shuffled minibatch-based GDM (SMDGM) algorithm is in fact the most popularly used MGDM algorithm in practice, but little is understood regarding its statistical properties. However, Theorem \ref{thm:ss} developed in the previous subsection provides us an opportunity to obtain some theoretical insights.
Specifically, let $ \mS^{(k,m)} $ be the $ m $-th minibatch used in the $ k $-th shuffled partition ($ 1\le k\le K $). We should have $ \mathbb{S} = \cup_{m=1}^M  \mS^{(k,m)} $ for every $ k $ and $ \mS^{(k,m_1)} \cap \mS^{(k,m_2)} = \emptyset   $ for any $ m_1 \ne m_2 $ and every $ k $.
Different shuffled partitions subsequently determine different dynamical systems and thus different stable solutions.
More specifically, consider one particular shuffled partition $\{ \mS^{(k,m)} : 1\le m\le M \}  $ for one particular $1\le k\le K $. Then, the corresponding dynamical system becomes
\begin{align} \label{dynamical system shuffled}
	\begin{cases}
		\widehat \theta^{(t, 1)} = \wh\Delta^{(k, 1)} \widehat \theta^{(t, M)} - \gamma \widehat \theta^{(t, M-1)} + \alpha\widehat \Sigma_{xy}^{(k, 1)}\\
		\widehat \theta^{(t,2)} = \wh\Delta^{(k,2)}  \widehat \theta^{(t,1)} - \gamma \widehat \theta^{(t,M)} + \alpha\widehat \Sigma_{xy}^{(k,2)}\\
		\cdots\\
		\widehat \theta^{(t,M)} = \wh\Delta^{(k,M)}  \widehat \theta^{(t, M-1)} - \gamma \widehat \theta^{(t,M-2)} + \alpha\widehat \Sigma_{xy}^{(k,M)},\\
	\end{cases}
\end{align}
where $ \wh\Delta^{(k, m)} =(1+\gamma) I_p - \alpha \widehat \Sigma_{xx}^{(k, m)} \in \mR^{p\times p}$, $ \wh\Sigma_{xx}^{(k, m)}  = n^{-1}\sum_{i\in\mS^{(k, m)}} X_i X_i^\top $, and $  \wh\Sigma_{xy}^{(k, m)}  = n^{-1}\sum_{i\in\mS^{(k, m)}} X_i Y_i  $ for each $ 1\le m\le M $.
We denote the corresponding stable solution by $ \wh \theta^{(k)*} $. Then, from Theorem \ref{thm:ss}, we know that $  \wh \theta^{(k)*}  - \wh\theta_\text{ols}^* = \alpha \wh E^{(k)} \{1+O_p( \alpha  )\} $, where $ \wh E^{(k)} =  \big(P_1 \widehat{\Sigma}_{xx} ^{-1}P_1^\top \wh B^{(k)} - I_q \big)  \big\{P_2 (A_{22}^P)^{-1} P_2^\top \big\} \big(\wh B^{(k)}  \widehat{\theta}_\text{ols}^* -\widehat{\Sigma}_{xy}^{(k)*}  \big) $, $ \wh B^{(k)}  $ has the same form as \eqref{B} but with $ \wh\Sigma_{xx}^{( m)} $ replaced by $ \wh\Sigma_{xx}^{(k, m)} $ for each $ 1\le m\le M $, and $ \wh\Sigma_{xy}^{(k)*} =  \big(\wh\Sigma_{xy}^{(k, 1)\top}, \dots, \wh \Sigma_{xy}^{(k, M)\top}\big)^\top $.
Consequently, as long as we can show that $ \max_{1\le k \le K}\|\wh E^{(k)}\|=o_p( 1/\sqrt{N}) $, then the difference between the shuffled estimator $  \wh \theta^{(k)*} $ and the global OLS estimator is ignorable uniformly over all shuffled partitions.

Specifically, we first note that  $\max_{1\le k \le K} \| \wh E^{(k)}\| \le \big(\max_{1\le k\le K} \big\|P_1 \widehat{\Sigma}_{xx} ^{-1}P_1^\top \wh B^{(k)} \big\|+ 1 \big)\  \big\|P_2 (A_{22}^P)^{-1} P_2^\top    \big\| \max_{1\le k\le K} \big\|\wh B^{(k)}  \widehat{\theta}_\text{ols}^* -\widehat{\Sigma}_{xy}^{(k)*}  \big\|$, where
\begin{align*}
	P_1 \widehat{\Sigma}_{xx} ^{-1}P_1^\top \wh B^{(k)} = M^{-1}\begin{bmatrix}
		\wh\Sigma_{xx}^{-1} \wh\Sigma_{xx}^{(k, 2)} &\wh\Sigma_{xx}^{-1} \wh\Sigma_{xx}^{(k, 3)}&\cdots&\wh\Sigma_{xx}^{-1} \wh\Sigma_{xx}^{(k, 1)}\\
		\wh\Sigma_{xx}^{-1} \wh\Sigma_{xx}^{(k, 2)} &\wh\Sigma_{xx}^{-1} \wh\Sigma_{xx}^{(k, 3)}&\cdots&\wh\Sigma_{xx}^{-1} \wh\Sigma_{xx}^{(k, 1)}\\
		\vdots&\vdots&\vdots&\vdots\\
		\wh\Sigma_{xx}^{-1} \wh\Sigma_{xx}^{(k, 2)} &\wh\Sigma_{xx}^{-1} \wh\Sigma_{xx}^{(k, 3)}&\cdots&\wh\Sigma_{xx}^{-1} \wh\Sigma_{xx}^{(k, 1)}\\
	\end{bmatrix},
\end{align*}
and $ \|\wh B^{(k)}  \widehat{\theta}_\text{ols}^* -\widehat{\Sigma}_{xy}^{(k)*}  \|= \sum_{m=1}^{M}  \|  \wh\Sigma_{xx}^{(k, m)}\wh \theta_\text{ols} -  \wh\Sigma_{xy}^{(k, m)} \|$. For simplicity, we next assume that both the covariate $ X_i  $ and the random error $ \varepsilon_i $ are normally distributed.
We are then able to study the asymptotic behavior of $ \max_{1\le k \le K} \| \wh E^{(k)}\| $.
To this end, we need to study $\max_{1\le k\le K} \|P_1 \widehat{\Sigma}_{xx} ^{-1}P_1^\top \wh B^{(k)} \| $ and $  \max_{1\le k\le K} \big\|\wh B^{(k)}  \widehat{\theta}_\text{ols}^* -\widehat{\Sigma}_{xy}^{(k)*} \| $ separately.
We first consider  $\max_{1\le k\le K} \|P_1 \widehat{\Sigma}_{xx} ^{-1}P_1^\top \wh B^{(k)} \| $.
By Lemma A.3 in \cite{bickel2008regularized}, there should exist constants $ C_1>0 $ and $ C_2>0 $ such that $ P(\max_{1\le m\le M}\|\wh\Sigma_{xx}^{(k, m)} - \Sigma_{xx}\|_F \ge\nu )\le C_1\exp(-C_2 N\nu^2) $ for any sufficiently small $ \nu>0 $, as long as $ M $ and $ p $ are fixed. Consequently, we should have $ P(\max_{m,k}\|\wh\Sigma_{xx}^{(k, m)} - \Sigma_{xx}\|_F \ge\nu )\le C_1\exp(\log K-C_2 N\nu^2)  $, which tends to $ 0 $ as long as $ \log K = o( N) $.
Therefore, we can derive that $ \max_{m,k}\|\widehat{\Sigma}_{xx} ^{-1} \wh\Sigma_{xx}^{(k, m)}\|  \le \|\widehat{\Sigma}_{xx} ^{-1}\|\max_{m,k}\| \wh\Sigma_{xx}^{(k, m)}\|_F  =O_p(1) $. As a consequence, we should have $ \max_{1\le k\le K} \|P_1 \widehat{\Sigma}_{xx} ^{-1}P_1^\top \wh B^{(k)} \|=O_p(1) $ as long as $ \log K = o(N) $.

Next, we study $  \max_{1\le k\le K} \big\|\wh B^{(k)}  \widehat{\theta}_\text{ols}^* -\widehat{\Sigma}_{xy}^{(k)*} \| $. Let $ \wh \Sigma_{x\varepsilon} = N^{-1} \sum_{i=1}^{N} X_i\varepsilon_i$, and $ \wh \Sigma_{x\varepsilon}^{(k,m)}  =  n^{-1}\sum_{i\in\mS^{(k, m)}} X_i \varepsilon_i $.
Then, we should have $ \widehat{\Sigma}_{xy}^{(k,m)}  = \widehat{\Sigma}_{xx}^{(k,m)} \theta_0 + \Sigma_{x\varepsilon}^{(k,m)} $, and $\wh \theta_\text{ols} =\theta_0 + \widehat{\Sigma}_{xx}^{-1}  \wh \Sigma_{x\varepsilon}$.
Note that $ \|  \wh\Sigma_{xx}^{(k, m)}\wh \theta_\text{ols} -  \wh\Sigma_{xy}^{(k, m)} \|  = \|    \wh\Sigma_{xx}^{(k, m)}  \wh\Sigma_{xx}^{-1}  \wh\Sigma_{x\varepsilon} -  \wh\Sigma_{x\varepsilon}^{(k, m)} \| \le \| \wh\Sigma_{xx}^{(k, m)} \| \| \wh\Sigma_{xx}^{-1}  \wh\Sigma_{x\varepsilon}  \| + \|\wh\Sigma_{x\varepsilon}^{(k, m)} \|$.
Then by  Bernstein’s inequality \citep[Theorem 2.8.2]{vershynin2018high}, we have $ P( \sqrt{N / \log K}\max_{1\le m\le M} \|\wh\Sigma_{x\varepsilon}^{(k, m)} \| \ge\nu)\le  C_1 \exp\big[-C_2 \min\{(\log K) \nu^2, \sqrt{N\log K} \nu\}\big]$ for any $ \nu> 0 $ as long as $ p $ and $ M $ are fixed, where $ C_1, C_2 $ are two positive constants. Hence, it follows that $ P(\sqrt{N / \log K} \max_{m,k}  \|\wh\Sigma_{x\varepsilon}^{(k, m)} \| \ge\nu)\le C_1 \exp\big[\log K-C_2 \min\{(\log K) \nu^2, \sqrt{N\log K} \nu\}\big]$ for any $ \nu> 0 $. As a consequence, for any $ \delta>0 $, there should be a sufficiently large constant $ \nu_\delta>0$ such that $ P(\sqrt{N/\log K} \max_{m,k}  \|\wh\Sigma_{x\varepsilon}^{(k, m)} \| \ge\nu_\delta)\le\delta  $ for each $ N $, as long as $ \log K =o(N)$. Thus, we should have $ \max_{m,k}  \|\wh\Sigma_{x\varepsilon}^{(k, m)} \| = O_p(\sqrt{(\log K) / N}) $.
Since $  \max_{k,m}   \|\wh\Sigma_{xx}^{(k, m)} \|=O_p(1) $ and $ \| \wh\Sigma_{xx}^{-1}  \wh\Sigma_{x\varepsilon}  \| = O_p( 1/\sqrt{N}) $, we can conclude that $  \max_{1\le k\le K} \big\|\wh B^{(k)} \widehat{\theta}_\text{ols}^* -\widehat{\Sigma}_{xy}^{(k)*} \| =O_p(\sqrt{(\log K) / N})$ as long as $ \log K = o(N) $.

The results above demonstrate that $ \max_{1\le k \le K} \| \wh E^{(k)}\| =O_p(\sqrt{(\log K) / N}) $ provided $ \log K =o(N) $.
These results further suggest that differences between the SMGDM estimators and the global OLS estimator are uniformly bounded by $ o_p( 1/\sqrt{N}) $, as long as $ \alpha \sqrt{\log K} \to 0$ and $ \log K = o(N) $.
Consequently, they should share the same statistical efficiency as the global OLS estimator.

\csection{NUMERICAL STUDIES}

In this section, we conduct extensive numerical experiments to corroborate our theoretical findings. First, we study the numerical convergence rate of the FMGDM method. Second, we explore the statistical properties of the stable solutions corresponding to the different tuning parameters specifications. Last, we compare the performances of the different types of MGDM methods.

\csubsection{Numerical Convergence}

In this subsection, we demonstrate how the tuning parameters  $ \alpha $ and $ \gamma $ affect the numerical convergence speed of the FMGDM method.
Specifically, we consider a standard linear regression model \eqref{linear model} with $ N = 5000 $ and $ p=50 $. The random noise $ \varepsilon_i $'s are simulated from the standard normal distribution with mean $ 0 $ and variance $ \sigma^2 =1 $. The true regression coefficient $ \theta_0 $ is given by $ \theta_0= (\theta_{01},\dots, \theta_{0p})^\top $ with $ \theta_{0j}=10\exp(-0.5 j) $ for $ 1\le j \le p $. Furthermore, the covariate $ X_i $'s are simulated from a multivariate normal distribution $ N_p(\0, \Sigma_{xx}) $, where $ \Sigma_{xx} = I_p + \kappa\1_p \1_p^\top$. One can easily verify that $\lambda_1 =  \lambda_{\max}(\Sigma_{xx})=\kappa p+1 $ and $\lambda_p =  \lambda_{\min}(\Sigma_{xx}) = 1 $. Hence the condition number of $ \Sigma_{xx} $ is given by $ \kappa p+1 $. To compare the convergence speed of the fixed minibatch-based GD (FMGD, i.e., $ \gamma = 0 $) and FMGDM methods, we consider two different specifications: (i) (FMGD) $ \alpha = 2/(\lambda_{1} + \lambda_{p}) $ and $ \gamma  = 0 $ and (ii) (FMGDM) $ \alpha = 4/(\sqrt{\lambda_{1} }+ \sqrt{\lambda_p})^2 $ and $ \gamma = (\sqrt{\lambda_{1}} - \sqrt{\lambda_p})^2 / (\sqrt{\lambda_{1}} + \sqrt{\lambda_p})^2 $. As indicated by Theorem \ref{thm:convergence_speed}, these two specifications correspond to the optimal tuning parameters configurations of the FMGD and FMGDM methods, respectively. Different $ \kappa $ values are studied (i.e., $ \kappa = 1, 2, 3 $), and the minibatch size is fixed to $ n=500 $. Accordingly, the number of minibatches is given by $ M = N/n=10 $.
For each setting, we first compute the corresponding stable solution on the $ M $-th minibatch $ \wh \theta^{(M)}$ by solving the linear dynamical system \eqref{dynamical system} directly. Next, we study how the algorithm converges to it.
To this end, we run the algorithm for a total of $ T=30 $ epochs.
Recall that $ \widehat{\theta}^{(t, M)} $ is the estimate obtained from the $ M $-th minibatch in the $ t $-th epoch.
We then record the difference $\delta^{(t)} = \| \wh \theta^{(t, M)} -  \wh \theta^{(M)}\|$ for each epoch $ 1 \le t\le T $.
We randomly replicate the experiments a total number of $ 100 $ times.
This leads to a total of $ 100 $ $ \delta^{(t)}  $ values for each epoch $ 1\le t\le T $.
The medians of these values are then plotted in Figure \ref{fig:numerical_convergences}.

\begin{figure}[htbp]
	\centering
	\includegraphics[width = \textwidth]{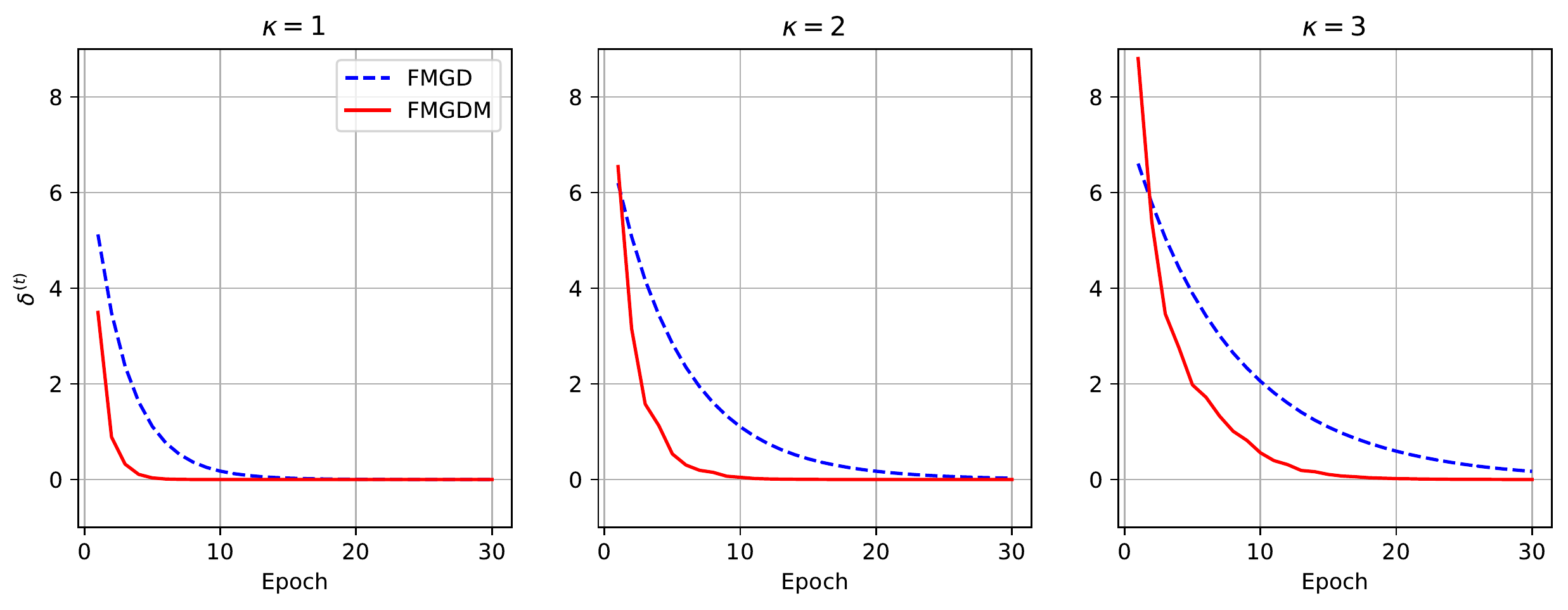}
	
	\caption{The numerical convergence rate of different algorithms (i.e., the FMGD and FMGDM methods). The vertical axis corresponds to the $ \delta^{(t)} $ value, while the horizontal axis corresponds to the epoch. The red solid line stands for the FMGDM method and the blue dashed line stands for the FMGD method.  }
	\label{fig:numerical_convergences}
\end{figure}

From Figure \ref{fig:numerical_convergences}, we can obtain the following interesting observations. First, we find that the curve of the FMGDM method converges towards $ 0 $ much faster than that of the FMGD method. This result confirms that our theoretical claims made in Theorem \ref{thm:convergence_speed}. Second, we can see that the convergence rates of both methods become slower as the conditional number increases. Moreover, the FMGDM method exhibits a more substantial advantage over the FMGD method. These findings also corroborate our theoretical results in Theorem \ref{thm:convergence_speed} (a) and (b) very well.

\csubsection{Statistical Efficiency of the Stable Solution}

This subsection is devoted to verifying the theoretical claims made in Section 2.5, that is, how the two tuning parameters (i.e., $ \alpha $ and $ \gamma $) affect the statistical efficiency of the stable solution. Specifically, we consider two different simulation experiments. One experiment studies the effects of the learning rate $ \alpha $ with a fixed momentum parameter $ \gamma $. The other experiment examines the effect of the momentum parameter $ \gamma $ while holding the learning rate $ \alpha $ fixed. In the first case, we fix $ \gamma =0.5 $ and let $ \alpha=0.15,0.1,0.05,0.01,0.001$, while in the second case, we fix $ \alpha=0.1 $ and let $ \gamma = 0, 0.5, 0.8, 1, 2,5, 10$.
Subsequently, we generate the simulated dataset as in Section 3.1 with $ \kappa = 1 $.
For each given $ (\alpha, \gamma) $-specification, the stable solution is computed for the last minibatch as $ \wh \theta^{(M)} $ by solving the linear dynamical system \eqref{dynamical system} directly. Then we evaluate the efficiency of different estimators by computing the estimation error (EE) as $ \|  \wh \theta^{(M)} - \theta_0\|^2 $, where $ \theta_0 $ is the true regression coefficient. For comparison, we also compute the EE values of the OLS estimator. This experiment is randomly replicated a total of $ 100 $ times, leading to a total of $ 100 $ EE values for each estimator, which are then log-transformed and shown as boxplots in Figure \ref{fig:stable_solution}.

\begin{figure}[htbp]
	\centering
	\includegraphics[width = \textwidth]{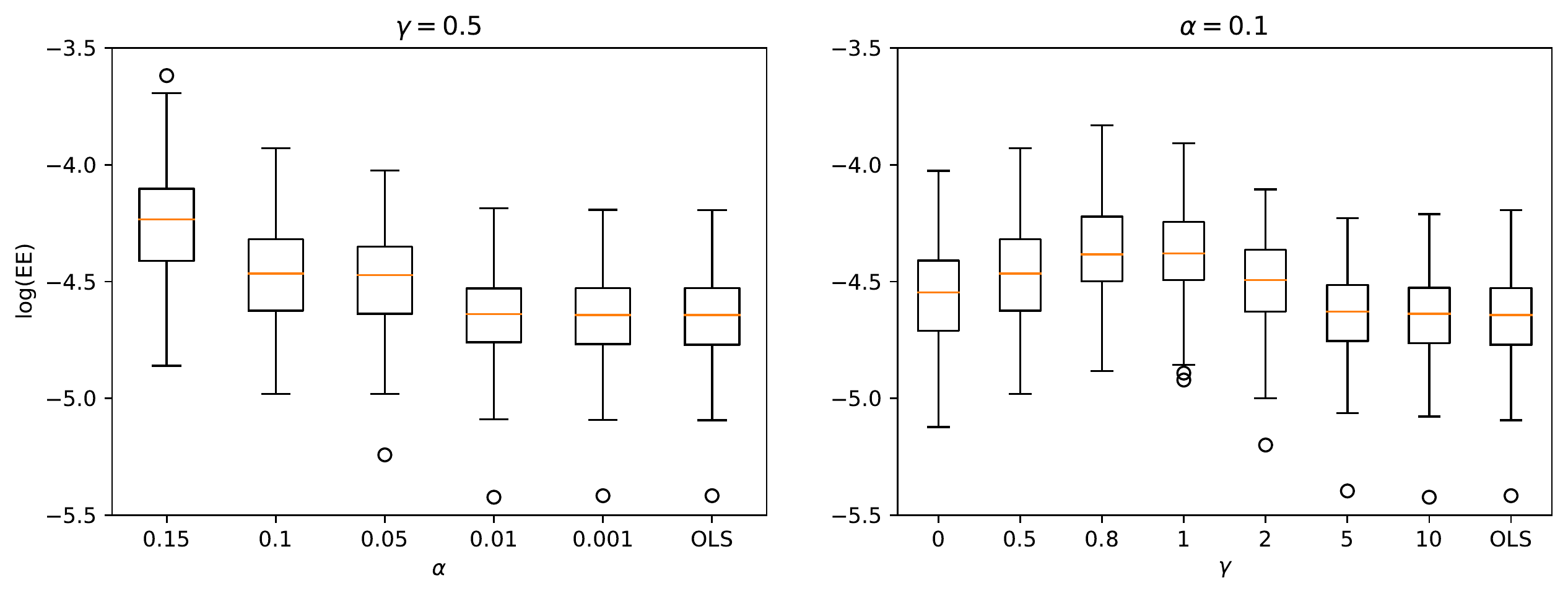}
	
	\caption{Boxplots of $ \log(\operatorname{EE}) $ of the FMGDM method under different $(\alpha, \gamma) $-specifications. In the left panel, we fix the momentum parameter as $ \gamma = 0.5 $ and let learning rate $ \alpha $ range from $ 0.15 $ to $ 0.001 $. In the right panel, we fix the learning rate as $ \alpha =0.1$ and let the momentum parameter $ \gamma $ range from $ 0 $ to $ 10 $.}
	\label{fig:stable_solution}
\end{figure}

From Figure \ref{fig:stable_solution}, we can draw the following conclusions.
First, from the left panel, we can see that, when the learning rate is large (e.g., $ 0.2, 0.1 $), the EE values of corresponding stable solutions are much larger than that of the global OLS estimator.
As the learning rate $ \alpha $ decays, the EE values steadily drop and then converge to that of the OLS estimator.
These findings verify the results in Theorem \ref{thm:ss} (a) and Theorem \ref{thm:asy_normality}. That is, by letting $ \alpha\to 0 $, the stable solution should have the same statistical efficiency as the global OLS estimator.
Second, from the right panel of Figure \ref{fig:stable_solution} we can see that when $ \gamma$ is small (e.g., $ \gamma\le 2 $), the EE values of the corresponding stable solutions are much larger than that of the OLS estimator. As $ \gamma $ increases, the EE values finally converge to that of the OLS estimator.
This finding is in agreement with the claim in Theorem \ref{thm:ss} (b).
Finally, we find that the EE values of the stable solution seem to increase from $ \gamma=0 $ to $ \gamma = 0.8 $.
To investigate this phenomenon, recall that by Theorem \ref{thm:asy_normality}, the extra variance term $ \alpha^2 M d_\gamma^2 \Sigma_{xx}  $ should be the dominant term of the EE value. Here, $ d_\gamma $ is the only factor involving $ \gamma $.
Therefore, we calculate values of $ d_\gamma $ for $ \gamma $ ranging from $ 0 $ to $ 10 $ by the explicit formulas in Appendix A.6. These values are then plotted, as shown in Figure \ref{fig:d_gamma}. It can be easily seen that the curve of $ d_\gamma $ has a similar pattern as that of the EE values in the right panel of Figure \ref{fig:stable_solution}. That is, the value of $ d_\gamma $ first increases with increasing $ \gamma $ from $ 0 $ to some value slightly smaller than $ 1 $, and then, $ d_\gamma $ decreases as $ \gamma $ increases.
This result further validates the asymptotic variance formula given in Theorem \ref{thm:asy_normality}.

\begin{figure}[htbp]
	\centering
	\includegraphics[width = 0.75\textwidth]{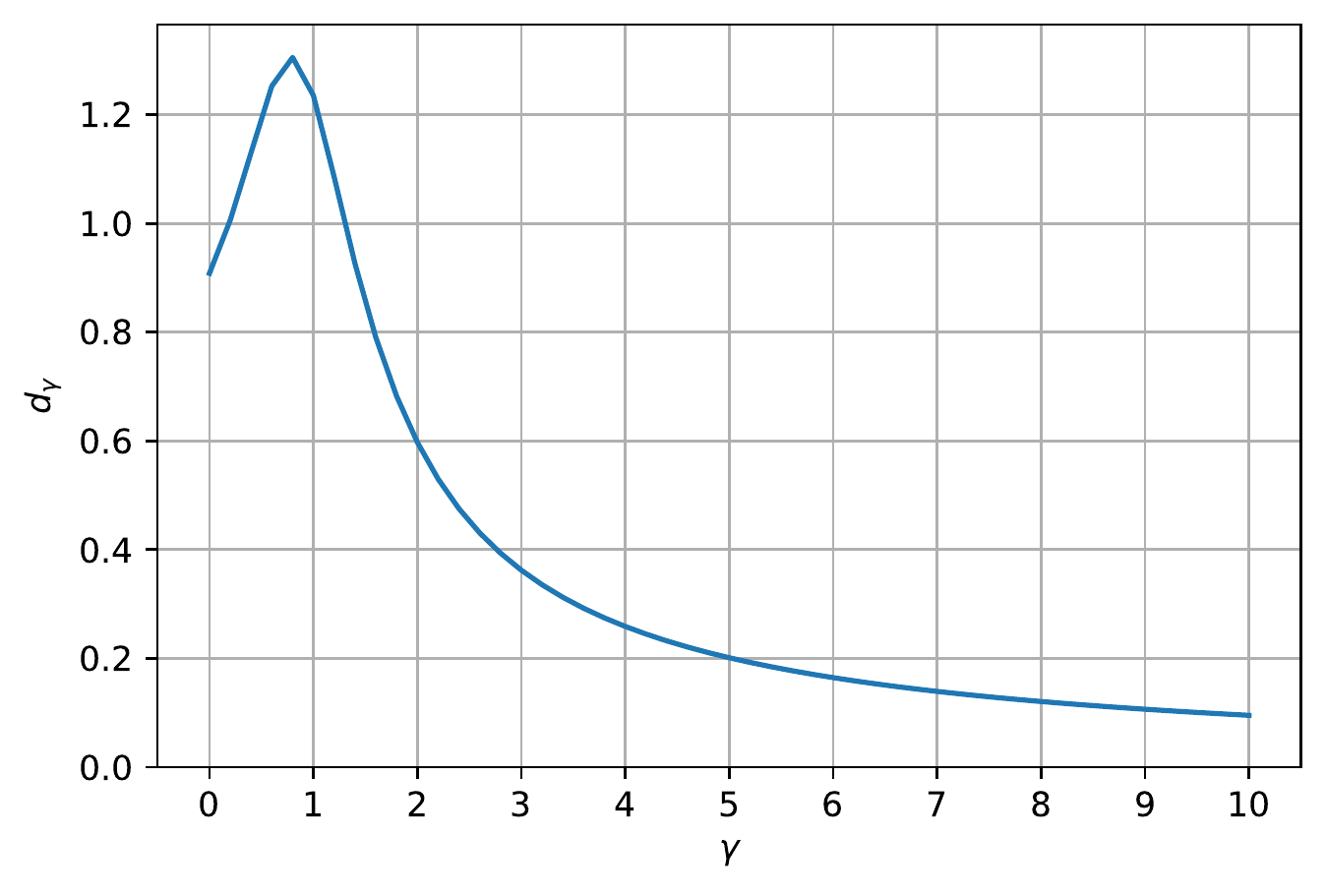}
	
	\caption{The curve of $ d_\gamma $ with $ \gamma $ ranging from $ 0 $ to $ 10 $. The values of $ d_\gamma $ are calculated by the explicit formulas \eqref{eq:a_gamma>0} and \eqref{eq:a_gamma=0} in Appendix A.6.}
	\label{fig:d_gamma}
\end{figure}

\csubsection{Different Types of MGDM Methods}

Next, we compare the performance of the different types of MGDM methods.
Specifically, we consider three different types of minibatch methods, i.e., fixed, shuffled and random minibatches. The corresponding MGDM methods are referred to as FMGDM, SMGDM and RMGDM methods, respectively.
For the RMGDM method, we use simple random sampling without replacement to generate each minibatch.
In this experiment, we consider three different learning rates as $ \alpha = 0.04, 0.02, 0.01 $ and fix the momentum parameter as $ \gamma = 0.9 $ by convention.
Subsequently, we simulate the dataset as in Section 3.1 with $ \kappa = 1 $.
Once the data are generated, we run each MGDM algorithm for a total of $ T=50 $ epochs and record the resulting estimator $ \wh \theta^{(T, M)} $ from the $ M $-th minibatch. For a fair comparison, passing over every $ M = N/n = 10$ minibatches is considered as one epoch for each MGDM method.
Then we compute the corresponding estimation error (EE) as $ \|\wh \theta^{(T, M)} -\theta_0\|^2  $ for each method under each $ (\alpha, \gamma) $-specification.
For comparison, we also compute the EE values of the OLS estimator.
We randomly replicate the experiment a total of $ 100 $ times, leading to a total of $ 100 $ EE values for each estimator.
These EE values are then log-transformed and shown as boxplots in Figure \ref{fig:MGDMs}.

\begin{figure}[htbp]
	\centering
	\includegraphics[width = 0.75\textwidth]{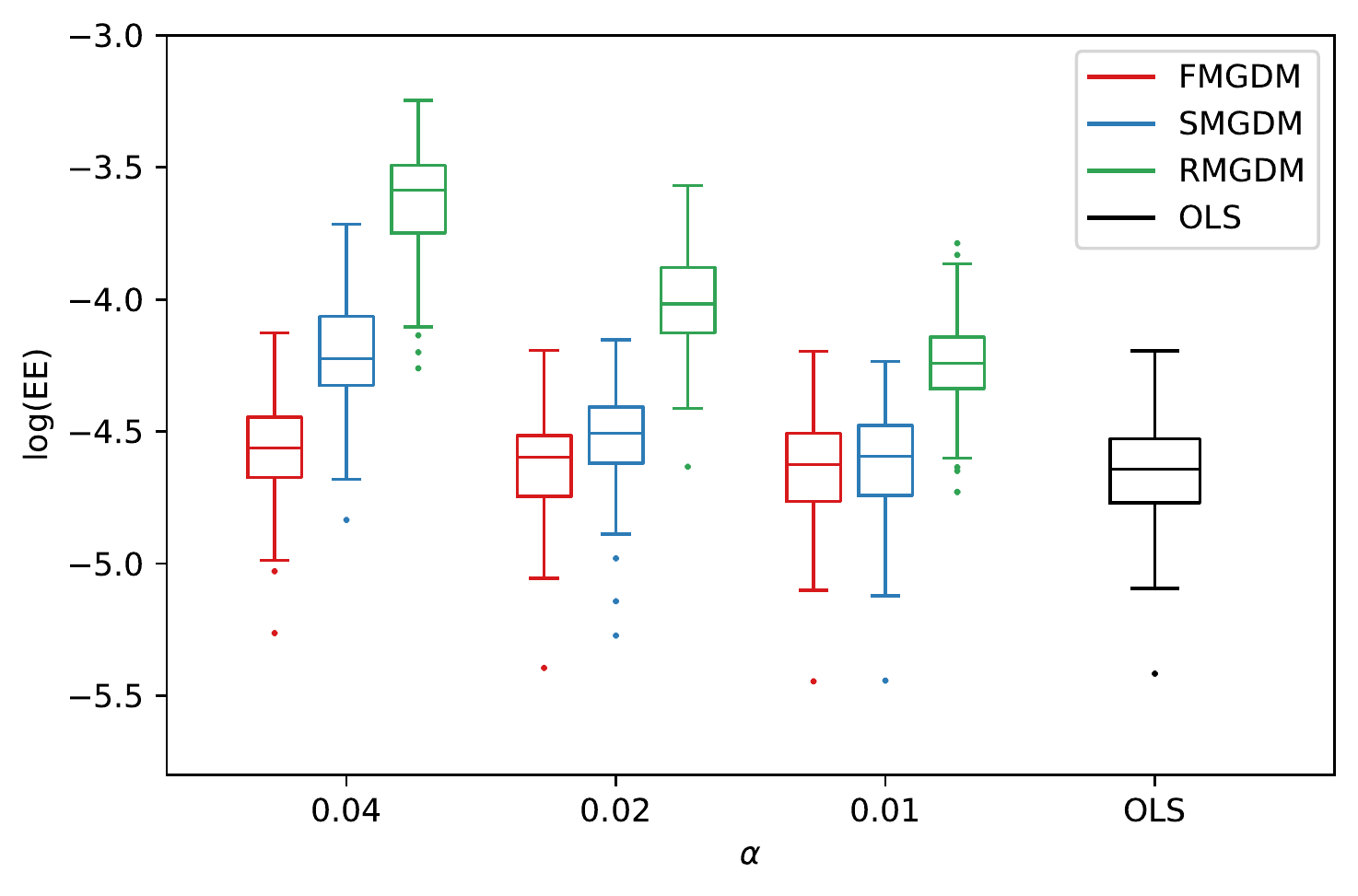}
	
	\caption{Boxplots of $ \log(\operatorname{EE}) $ of the FMGDM, SMGDM, RMGDM and OLS estimators. For the three MGDM methods, we consider three different learning rates  $ \alpha = 0.04, 0.02, 0.01 $ and fix the momentum parameter as $ \gamma = 0.9 $. Red boxes represent the FMGDM estimator, blue boxes represent the SMGDM estimator, green boxes represent the RMGDM estimator, and the black box represents the OLS estimator.}
	\label{fig:MGDMs}
\end{figure}

From Figure \ref{fig:MGDMs}, we immediately observe that the FMGDM estimator has the smallest EE values among the three MGDM methods.
For the SMGDM method, we can find that the corresponding EE values converge to that of the OLS estimator as the learning rate $ \alpha $ decreases.
This finding corroborates our justifications in Section 2.6 very well.
That is, the SMGDM estimator should enjoy the same statistical efficiency as the OLS estimator with a learning rate of $ \alpha \to 0 $.
The EE values of the RMGDM estimator also decrease with decreasing learning rate.
However, there is still a clear gap between the RMGDM estimator and the OLS estimator in terms of EE values, even though the learning rate is sufficiently small (i.e., $ \alpha = 0.01 $).
This result may be due to the extra random errors created by simple random sampling.
These results demonstrate that the FMGDM and SMGDM methods are more preferable in terms of statistical efficiency compared to the RMGDM method.

\csubsection{Application to Airline Data}

To demonstrate the performance of the MGDM methods on a real-world dataset, we consider here the U.S. Airline Dataset. The whole dataset is available for download at \url{http: //stat-computing.org/dataexpo/2009}. It contains the flight arrival and departure details for all commercial flights within the USA from 1987 to 2008. We use the data from 2008, and aim to predict the delay time in the arrival of a flight given other flight information.
Each record of the data contains the arrival delay, departure time, scheduled arrival and departure time, flight date,  distance of the flight, information of the carrier, origin and destination. The detailed variable information is described in Table \ref{table:airline}. 
The six continuous variables are standardized to be mean 0 and variance 1, and the five categorical variables are converted to dummy variables with appropriate dimensions. Finally, a total of 126 predictors and one response variable are used in the linear regression model, and the total sample size is over 2.5 million observations.

In this experiment, we set the minibatch size as $ n=1000  $, momentum parameter as $ \gamma = 0.9 $ and consider three different learning rates $ \alpha = 0.4, 0.2, 0.1 $.
For comparison purpose, we first randomly choose a subsample of size $ N= 10^5 $ from the whole sample, and then run each MGDM algorithm with $ M = N/n = 100 $ for a total of $T= 100 $ epochs. Next, we compute the estimation error  (EE) as $ \| \wh\theta^{(T, M)} -\wh \theta_\text{ols}\|^2$ for each method under each $ (\alpha, \gamma) $-specification, where $\wh \theta_\text{ols} $ is the OLS estimator of the regression coefficient based on the whole sample. We replicate the experiment through $ 50 $ times random subsampling, leading to a total of $ 50 $ EE values for each estimator. These EE values are then log-transformed and shown as boxplots in Figure \ref{fig:airline}. From Figure \ref{fig:airline}, we can see a similar pattern as shown in Figure \ref{fig:MGDMs}. That is, as the learning rate $ \alpha $ decreases, the EE values of all three methods decrease. In addition, the FMGDM and SMGDM methods show a competitive performance. The RMGDM method has the worst performance under all three $ \alpha $ specifications.

\begin{figure}[htbp]
	\centering
	\includegraphics[width = 0.75\textwidth]{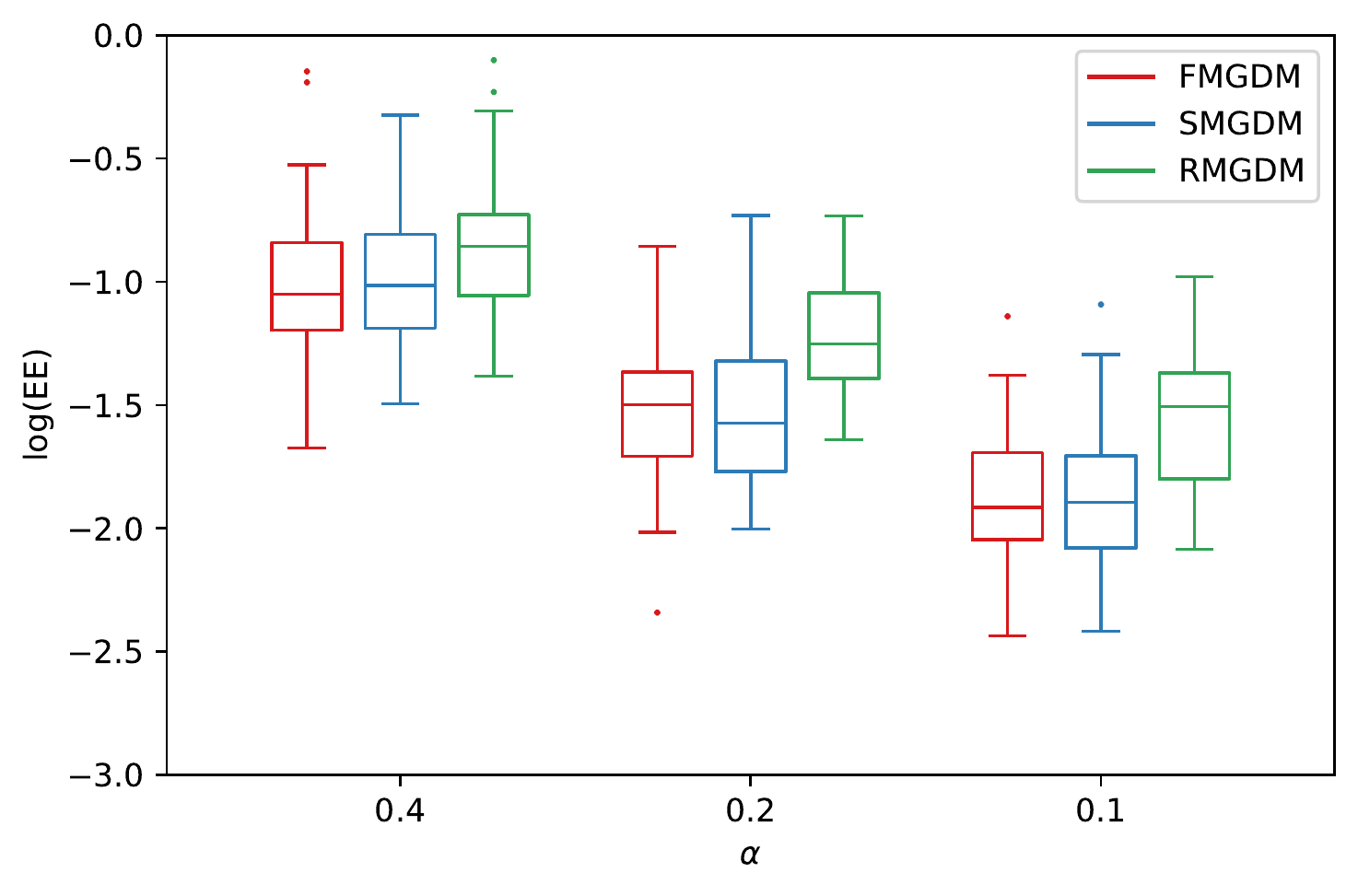}
	
	\caption{Boxplots of $ \log(\operatorname{EE}) $ of the FMGDM, SMGDM, and RMGDM estimators based on the airline data. For the three MGDM methods, we consider three different learning rates  $ \alpha = 0.4, 0.2, 0.1 $ and fix the momentum parameter as $ \gamma = 0.9 $. Red boxes represent the FMGDM estimator, blue boxes represent the SMGDM estimator, green boxes represent the RMGDM estimator.}
	\label{fig:airline}
\end{figure}

\begin{sidewaystable}
\caption{Variable description for the U.S. airline data from 2008. Numerical variables are standardized to have mean 0 and variance 1.}
\label{table:airline}
\centering
\begin{tabular}{lll}
	\toprule
	Variable&Description&Variable used in the model\\
	\midrule
	\verb|ArrDelay|&How long was the delay in the arrival of the flight &Used as the numerical response variable\\
	\midrule
	\verb|Month|&Which month of the year&Converted to 11 dummies\\
	\verb|DayofMonth|&Which day of the month&Used as numerical variable\\
	\verb|DayOfWeek|&Which day of the week&Converted to 6 dummies\\
	\verb|DepTime|&Actual departure time&Used as numerical variable\\
	\verb|CRSDepTime|&Scheduled departure time&Used as numerical variable\\
	\verb|CRSArrTime|&Scheduled arrival time&Used as numerical variable\\
	\verb|Distance|&Distance between the origin and destination in miles&Used as numerical variable\\
	\verb|UniqueCarrier|&Flight carrier code for 29 carriers&Top 7 carries converted to 6 dummies\\
	\verb|Origin|&Departing origin &Top 50 origin cities converted to 49 dummies\\
	\verb|Dest|&Destination of the flight&Top 50 origin cities converted to 49 dummies\\
	\toprule
\end{tabular}
\end{sidewaystable}

\csection{CONCLUDING REMARKS}

In this paper, we study the theoretical properties of MGDM algorithms for linear regression models.
We focus mainly on two different types of MGDM methods, i.e., fixed MGDM (FMGDM) and shuffled MGDM (SMGDM) methods.
We first investigate the FMGDM method by introducing a linear dynamical system.
Then, we provide nearly sufficient and necessary conditions for numerical convergence.
Furthermore, the numerical convergence rate of FMGDM algorithm is investigated.
This analysis leads to the theoretically optimal specification of the tuning parameters.
Theoretical findings reveal that the extra momentum term could greatly speed up numerical convergence compared to a standard gradient descent method without momentum.
This is particularly true for problems with large condition numbers.
Subsequently, the statistical properties of the resulting FMGDM estimator are carefully investigated and the asymptotic normality is further established.
We then find that with a sufficiently small learning rate, the MGDM estimators (both fixed and shuffled) can share the same statistical efficiency as the global estimator.
Finally, these theoretical results are fully validated through extensive numerical studies.
To conclude this article, we discuss here several interesting topics for future study.
First, we examine the FMGDM algorithm for the quadratic objective function, due to its analytical simplicity.
Of note, many learning problems have non-quadratic objection functions, such as logistic regression and quantile regression models. However, it seems to us that the theoretical properties of either the fixed or shuffled MGDM estimators for these problems remain unknown. This is a research topic of great interest.
Second, modern statistical inference often involves ultrahigh-dimensional problems.
It is valuable to extend our results on the MGDM methods to the models with diverging $ p $.
Last, there are many momentum-related methods (e.g., Nesterov's momentum method,  RMSprop and ADAM) that are widely used in practice.
Therefore, it is also very important to investigate their theoretical properties under a fixed or shuffled minibatch setup.

\renewcommand{\theequation}{A.\arabic{equation}}
\setcounter{equation}{0}

\scsection{APPENDIX A}

\scsubsection{Appendix A.1 Proof of Theorem \ref{thm:invertible}}

Construct a block matrix analogous to $ \wh\Omega $ in \eqref{Omega_hat} as
\begin{align} \label{Omega}
	\Omega = \begin{bmatrix}
		I_p&\0&\0&\cdots&\0&\gamma I_p&-\Delta\\
		-\Delta&I_p&\0&\cdots&\0&\0&\gamma I_p\\
		\gamma I_p&-\Delta&I_p&\0&\cdots&\0&\0\\
		\vdots&\vdots&\vdots&\vdots&\vdots &\vdots&\vdots\\
		\0&\0&\0&\cdots&\gamma I_p&-\Delta&I_p
	\end{bmatrix}\in \mR^{q\times q},
\end{align}
where $ \Delta = (1+\gamma) I_p - \alpha \Sigma_{xx} \in \mR^{p \times p}$. By Lemma \ref{lemma:invertible} in the Appendix B, we know that $ \Omega $ is invertible. Note that
\begin{align*}
	\wh\Omega - \Omega  = \alpha \begin{bmatrix}
		\0&\0&\cdots&\0&\0&\widehat{\Sigma}_{xx}^{(1)} - \Sigma_{xx}\\
		\widehat{\Sigma}_{xx}^{(2)} - \Sigma_{xx}&\0&\cdots&\0&\0&\0\\
		\0&\widehat{\Sigma}_{xx}^{(3)} - \Sigma_{xx}&\cdots&\0&\0&\0\\
		\vdots&\vdots&\vdots&\vdots&\vdots&\vdots\\
		\0&\0&\cdots&\0&\widehat{\Sigma}_{xx}^{(M)} - \Sigma_{xx}&\0\\
	\end{bmatrix}.
\end{align*}
By law of large number, we should have $ \|\widehat{\Sigma}_{xx}^{(m)} - \Sigma_{xx} \|\to_p 0 $ as $ n \to \infty$ for each $ 1\le m\le M $.
Hence, $ \| \wh\Omega - \Omega \|\to_p 0 $ as $ n \to \infty $.
Since determinant is a continuous function, we should have $ |\wh\Omega| \to_p |\Omega|\ne 0 $ as $ n \to \infty $. Thus, the claim of the theorem follows immediately.

\scsubsection{Appendix A.2 Proof of Theorem \ref{thm:convergence}}

By \eqref{FMBM1} and \eqref{FMBM2}, we have
\begin{align*}
	\widehat{\theta}^{(t, m)} &= 	\widehat{\theta}^{(t, m-1)} - \gamma(	\widehat{\theta}^{(t, m-2)}-	\widehat{\theta}^{(t, m)} ) -\alpha \Big\{ \widehat{\Sigma}_{xx}^{(m)}	\widehat{\theta}^{(t, m-1)}  -  \widehat{\Sigma}_{xy}^{(m)}\Big\}\\
	&= \Big\{(1+\gamma)I_p - \alpha \widehat{\Sigma}_{xx}^{(m)} \Big\} 	\widehat{\theta}^{(t, m-1)} - \gamma 	\widehat{\theta}^{(t, m-2)} +\alpha \widehat{\Sigma}_{xy}^{(m)}.
\end{align*}
Note the stable solution also satisfies $ \widehat{\theta}^{(m)} = \Big\{(1+\gamma)I_p - \alpha \widehat{\Sigma}_{xx}^{(m)} \Big\} 	\widehat{\theta}^{(m-1)} - \gamma 	\widehat{\theta}^{(m-2)} +\alpha \widehat{\Sigma}_{xy}^{(m)}  $. Hence, taking a difference of above two equations gives
\begin{align*}
	\widehat{\theta}^{(t, m)}- \widehat{\theta}^{(m)} &= \Big\{(1+\gamma)I_p - \alpha \widehat{\Sigma}_{xx}^{(m)} \Big\} 	\Big(	\widehat{\theta}^{(t, m-1)} - \widehat{\theta}^{(m-1)}\Big) - \gamma \Big(	\widehat{\theta}^{(t, m-2)} -	\widehat{\theta}^{(m-2)} \Big),
\end{align*}
where $ \widehat{\theta}^{(t, 0)} =  \widehat{\theta}^{(t-1, M)} $ and $ \widehat{\theta}^{(t, -1)} =  \widehat{\theta}^{(t-1, M-1)} $. If we denote $ \delta^{(t, m)} =  	\widehat{\theta}^{(t, m)}- \widehat{\theta}^{(m)}$, then we have
\begin{align*}
	r^{(t, m)}=\begin{bmatrix}
		\delta^{(t, m)}\\
		\delta^{(t, m-1)}
	\end{bmatrix} = \begin{bmatrix}
		\wh\Delta^{(m)}& -\gamma I_p\\
		I_p& \0
	\end{bmatrix}\begin{bmatrix}
		\delta^{(t, m-1)}\\
		\delta^{(t, m-2)}
	\end{bmatrix}= D^{(m)} r^{(t,m-1)}.
\end{align*}
Consequently, we obtain the recursion formula as $ r^{(t,m)} = C^{(m)}  r^{(t-1,m)}$, where
\begin{align*}
	&C^{(1)}  = D^{(1)} D^{(M)}  D^{(M-1)} \cdots D^{(2)}, \\
	&C^{(m)}  = D^{(m)}  \cdots D^{(1)} D^{(M)} \cdots D^{(m+1)},\ 2\le m\le M-1,\\
	&C^{(M)}  = D^{(M)} D^{(M-1)} \cdots \cdots D^{(1)}.
\end{align*}

\noindent\textbf{Proof of (a)}. Consider the block matrix $ D = \big[\Delta, -\gamma I_p; I_p, \0 \big] \in \mR^{(2p) \times (2p)} $, where $ \Delta = (1+\gamma) I_p - \alpha \Sigma_{xx} \in \mR^{p \times p}$. By assumed conditions in (a) and Lemma \ref{lemma:SR}(a), we know that $ \rho(D)<1 $. Note that $ \|D^{(m)} - D\| = \alpha \|\widehat \Sigma_{xx}^{(m)} -\Sigma_{xx} \| \to_p 0  $ as $ n \to \infty $. Since $ M $ is fixed, it follows immediately that $ \|C^{(m)} - D^M\|\to_p 0  $ as $ n \to \infty $. Because the spectral radius is a continuous function of the entries of a matrix, we obtain that $ \rho(C^{(m)}) \to_p \rho(D^M) = \rho^M(D) <1$  as $ n\to \infty $.  Consequently, we should have $ \rho\{ (C^{(m)} )^t \}= \rho^t(C^{(m)} ) \to_p 0$  as $ n\to \infty $. Then by recursion formula, we have $ r^{(t,m)} = \{C^{(m)}\}^t  r^{(0,m)} \to_p \0$ as $ n\to \infty $. Thus, (a) follows immediately.

\noindent\textbf{Proof of (b)}. By assumed conditions in (b) and Lemma \ref{lemma:SR}(e), we have $ \rho(D)>1$. Then using the similar procedures in the proof of (a), we can obtain that $ \rho(C^{(m)}) \to_p \rho^M(D) > 1$ as $ n\to \infty $. Consequently, $ \{C^{(m)}\}^t \not\to \0$ as $ t\to \infty $ holds with probability tending to one as $ n \to \infty$ \citep[Theorem 5.6.12]{horn2012matrix}. Thus, as long as $\delta^{(0, m)}\ne \0  $, $ \delta^{(t, m)} $ will not converge to zero with probability tending to one as $ n \to \infty$. We have proved (b) and complete the proof.

\scsubsection{Appendix A.3 Proof of Theorem \ref{thm:convergence_speed}}

From the proof of Theorem \ref{thm:convergence}, we know that $ \gamma\le \rho(D) <1$ if $ 0\le \gamma <1 $ and $0<\alpha< 2(1+\gamma)/\lambda_{1}$, and $ \rho(C^{(m)}) = \rho^M(D) + \zeta_n $, where $ \zeta_n\to_p 0 $ as $ n\to  \infty $.
Gelfand formula \citep[Corollary 5.6.14]{horn2012matrix} guarantees that $ \lim_{t \to \infty} \| \{C^{(m)} \}^t \|^{1/t} = \rho(C^{(m)})$. Then we have $ \| \{C^{(m)} \}^t \| = (\rho(C^{(m)}) + \eta_t)^t $, where $ \eta_t\to 0 $ as $ t \to \infty $. Recall that $ r^{(t,m)} = \{C^{(m)}\}^t  r^{(0,m)}$. Consequently, $ \|\delta^{(t, m)}\| \le \|r^{(t,m)}\| \le \|\{C^{(m)}\}^t\|\ \|r^{(0,m)}\|   \le  \{\rho^M(D) + \zeta_n + \eta_t\}^t \big(\|\delta^{(0, m)} \| + \|\delta^{(0, m-1)} \| \big)$. Then the conclusions (a), (b) and (c) follow directly from Lemma \ref{lemma:SR}(c), (b) and (d) in Appendix B, respectively.

\scsubsection{Appendix A.4 Proof of Theorem \ref{thm:ss}}

Recall that $ P = [P_1, P_2] $ is a orthogonal matrix, where $ P_1 = I^* / \sqrt{M} \in\mR^{q\times p}$. In fact, we can let $ P_2 = Z\otimes I_p \in \mR^{q \times(q-p)}$, where
\begin{align}\label{Z_mat}
	Z=\begin{bmatrix}
		1 / \sqrt{2} & 1 / \sqrt{6} & \cdots & 1 / \sqrt{M(M-1)} \\
		-1 / \sqrt{2} & 1 / \sqrt{6} & \cdots & 1 / \sqrt{M(M-1)} \\
		0, & -2 / \sqrt{6} & \cdots & 1 / \sqrt{M(M-1)} \\
		\vdots & \vdots & \vdots & \vdots \\
		0 & 0 & \cdots & -(M-1) / \sqrt{M(M-1)}
	\end{bmatrix}\in \mR^{M \times (M-1)}.
\end{align}

\noindent\textbf{Proof of (a)}. We will prove (a) in two steps. In the first step, we obtain the expression of $ \wh\Omega^*= (P^\top \wh\Omega P)^{-1} $. In the second step, we investigate stable solution by $ \widehat{\theta}^* = \alpha P(P^\top \wh\Omega P)^{-1}P^\top \widehat{\Sigma}_{xy}^*  $.

\textsc{Step 1.}
For the simplicity, we further denote $ \wh\Omega^*= [\wh\Omega_{11}^*, \wh\Omega_{12}^*; \wh\Omega_{21}^*, \wh\Omega_{22}^*] $, where $ \wh\Omega_{11}^*\in\mR^{p\times p} $, $ \wh\Omega_{12}^*\in\mR^{p\times (q-p)} $, $ \wh\Omega_{21}^*\in\mR^{(q-p)\times p} $, and $ \wh\Omega_{22}^*\in\mR^{(q-p)\times (q-p)} $. We then compute each block of $  \wh\Omega^* $ subsequently.
By techniques used in the proof of Lemma \ref{lemma:invertible}, one can verify that $ A $ is diagonalizable with $ q-p $ non-zero eigenvalues. Hence, $\rank(A) = \rank(P^\top A P) = \rank(A_{22}^p) = q-p$. i.e., $ A_{22}^p $ is invertible.  Note that $ \wh B_{11}^P =   P_1 ^\top \wh B P_1  = M^{-1}\sum_{m=1}^M \widehat{\Sigma}_{xx}^{(m)}=\widehat{\Sigma}_{xx}$ is also invertible almost surely. Consequently, it follows from the formula for the inverse of a block matrix \citep{seber2008matrix} and Taylor expansion that
\begin{align*}
	\wh\Omega_{11}^* &=  \Big[\alpha \wh B_{11}^P - (\alpha \wh B_{12}^P) (A_{22}^P+\alpha \wh B_{22}^P)^{-1} (\alpha \wh B_{21}^P) \Big]^{-1}\nonumber\\
	&=\alpha^{-1} \Big[ \widehat{\Sigma}_{xx}- \alpha \wh B_{12}^P\Big\{ (A_{22}^P)^{-1}  -(A_{22}^P)^{-1}  \wh B_{22}^P (A_{22}^P)^{-1} +O_p(\alpha^2) \Big\} \wh B_{21}^P \Big]^{-1}\\
	&=\alpha^{-1} \widehat{\Sigma}_{xx}^{-1} +\widehat{\Sigma}_{xx}^{-1}    \wh B_{12}^P (A_{22}^P)^{-1} \wh B_{21}^P   \widehat{\Sigma}_{xx}^{-1}  - \alpha \widehat{\Sigma}_{xx}^{-1}\wh B_{12}^P(A_{22}^P)^{-1}  \wh B_{22}^P (A_{22}^P)^{-1}\wh B_{21}^P \widehat{\Sigma}_{xx}^{-1} +O_p(\alpha^2).
\end{align*}
Similarly, it can be derived that
\begin{align*}
	\wh\Omega_{12}^* =&  -\wh\Omega_{11}^*  (\alpha \wh B_{12}^P)  (A_{22}^P+\alpha \wh B_{22}^P)^{-1} \\
	=& - \widehat{\Sigma}_{xx}^{-1} \wh B_{12}^P(A_{22}^P)^{-1} + \alpha \widehat{\Sigma}_{xx}^{-1} \wh B_{12}^P (A_{22}^P)^{-1} \wh B_{22}^P (A_{22}^P)^{-1} \\
	&- \alpha   \widehat{\Sigma}_{xx}^{-1}    \wh B_{12}^P (A_{22}^P)^{-1} \wh B_{21}^P   \widehat{\Sigma}_{xx}^{-1} \wh B_{12}^P(A_{22}^P)^{-1} +O_p(\alpha^2),\\
	\wh\Omega_{21}^* =& - (A_{22}^P+\alpha \wh B_{22}^P)^{-1} (\alpha \wh B_{21}^P) \wh\Omega_{11}^* \\
	=&-(A_{22}^P)^{-1} \wh B_{21}^P \widehat{\Sigma}_{xx}^{-1} + \alpha (A_{22}^P)^{-1} \wh B_{22}^P (A_{22}^P)^{-1}  \wh B_{21}^P \widehat{\Sigma}_{xx}^{-1}  \\
	&- \alpha (A_{22}^P)^{-1}\wh B_{21}^P \widehat{\Sigma}_{xx}^{-1}    \wh B_{12}^P (A_{22}^P)^{-1} \wh B_{21}^P   \widehat{\Sigma}_{xx}^{-1} +O_p(\alpha^2), \\
	\wh\Omega_{22}^* =& (A_{22}^P+\alpha \wh B_{22}^P)^{-1} + (A_{22}^P+\alpha \wh B_{22}^P)^{-1}  (\alpha \wh B_{21}^P)  \wh\Omega_{11}^*  (\alpha \wh B_{12}^P)  (A_{22}^P+\alpha \wh B_{22}^P)^{-1} \\
	=& (A_{22}^P)^{-1} - \alpha (A_{22}^P)^{-1} \wh B_{22}^P(A_{22}^P)^{-1}+ \alpha (A_{22}^P)^{-1}  \wh B_{21}^P \widehat{\Sigma}_{xx}^{-1} \wh B_{12}^P (A_{22}^P)^{-1}  +O_p(\alpha^2).
\end{align*}
Denote $ \wh H_1 = P_1 \widehat{\Sigma}_{xx}^{-1} P_1^\top $ and $ H_2 = P_2 (A_{22}^P)^{-1} P_2^\top $.
Then we should have
\begin{align}
 \wh\Omega^{-1}  =& P \wh\Omega^* P^\top = P_1 \wh\Omega_{11}^* P_1^\top + P_1 \wh\Omega_{12}^* P_2^\top + P_2 \wh\Omega_{21}^* P_1^\top + P_2 \wh\Omega_{22}^* P_2^\top \nonumber \\
 =& \alpha^{-1} \wh H_1 + (\wh H_1 \wh B H_2 \wh B \wh H_1 - \wh H_1 \wh B H_2 - H_2 \wh B \wh H_1 + H_2) \nonumber\\
 & - \alpha (\wh H_1 \wh B H_2 \wh B H_2 \wh B \wh H_1 -\wh H_1 \wh B H_2 \wh B H_2-H_2 \wh B H_2 \wh B \wh H_1+ H_2 \wh B H_2 ) \nonumber\\
 & + \alpha (  H_2 \wh B \wh H_1 \wh B H_2- \wh H_1 \wh B H_2 \wh B \wh H_1 \wh B H_2 - H_2 \wh B \wh H_1 \wh B H_2 \wh B \wh H_1) + O_p(\alpha^2) \nonumber\\
 =&\alpha^{-1} \wh H_1  + (\wh H_1 \wh B - I_q) H_2 (\wh B \wh H_1 - I_q)  - \alpha(\wh H_1 \wh B - I_q) (H_2 BH_2) (\wh B \wh H_1 - I_q)  \nonumber\\
 &+ \alpha (\wh H_1 \wh B - I_q) (H_2 \wh B \wh H_1 \wh B H_2)(\wh B \wh H_1 - I_q) - \alpha \wh H_1 \wh B H_2 \wh B \wh H_1 \wh B H_2 \wh B \wh H_1 +O_p(\alpha^2) \label{Omega_inv}.
\end{align}

\textsc{Step 2.}
Recall that $ P_1 = I^* / \sqrt{M} $. Hence, $ P_1^\top \widehat{\Sigma}_{xy}^* = M^{-1/2}\sum_{m=1}^M \widehat{\Sigma}_{xy}^{(m)}=\sqrt{M} \widehat{\Sigma}_{xy}  $.
Thus, $ \wh H_1   \widehat{\Sigma}_{xy}^* = \sqrt{M} P_1 \widehat{\Sigma}_{xx}^{-1} \widehat{\Sigma}_{xy} = I^* \wh \theta_\text{ols} = \widehat{\theta}_\text{ols}^*$.
Combine the results above, it can be concluded that
\begin{equation}\label{stable solution1}
	\widehat{\theta}^* = \alpha \wh\Omega^{-1}\widehat{\Sigma}_{xy}^* = \widehat{\theta}_\text{ols}^* +\alpha \widehat{E}\big\{ 1+O_p(\alpha) \big\},
\end{equation}
where
\begin{align*}
	\widehat{E} =& (\wh H_1 \wh B - I_q) H_2 (\wh B \wh H_1 - I_q)\widehat{\Sigma}_{xy}^*\\	
	=&\big(P_1 \widehat{\Sigma}_{xx} ^{-1}P_1^\top \wh B - I_q \big)  \big\{P_2 (A_{22}^P)^{-1} P_2^\top \big\} \big(\wh B \widehat{\theta}_\text{ols}^* -\widehat{\Sigma}_{xy}^*  \big). \label{E hat}
\end{align*}
We finish the proof of (a).

\noindent\textbf{Proof of (b)}. From the proof of (a), we can find that $ \gamma $ affects $ \wh\Omega^* $ only through $ A_{22}^P $. Note that $I_q = P P^\top  = P_1 P_1^\top + P_2 P_2^\top  $. Then we have
\begin{align*}
	\|A_{22}^P\|_F^2 = \tr\Big\{A_{22}^P (A_{22}^{P}) ^\top\Big\} &= \tr(P_2^\top A P_2 P_2^\top A^\top P_2) = \tr\Big\{A(I_q - P_1 P_1^\top) A^\top  (I_q - P_1 P_1^\top) \Big\}\\
	&=\tr(AA^\top) - 2\tr(A P_1 P_1^\top A^\top) + \tr(A P_1 P_1^\top A P_1 P_1^\top) \\
	&=\tr(AA^\top) = \big\{(1+\gamma)^2+\gamma^2+1\big\}q,
\end{align*}
where we have used the fact $  AP_1 = \0  $. Hence, we conclude that $ \|(A_{22}^P)^{-1}\| \to 0 $ as $ \gamma \to \infty $. Then by analogous procedures to derive \eqref{stable solution1}, we can obtain that $\widehat{\theta}^* = \widehat{\theta}_\text{ols}^* +\alpha \widehat{E} \Big\{1+O_p( \|(A_{22}^P)^{-1}\|  )\Big\}$,
where $ \wh E $ is the same as that in \eqref{stable solution1}.
The only difference is that the remainder term is $ O_p( \|(A_{22}^P)^{-1}\| ) $ instead of $ O_p( \alpha )$.
We have proved (b) and complete the proof.

\scsubsection{Appendix A.5 Proof of Theorem \ref{thm:asy_normality}}

Note that $ \widehat{\Sigma}_{xx}^{(m)} = n^{-1} \sum_{i \in \mS_{(m)} }X_i X_i^\top$ and $ \widehat{\Sigma}_{xy}^{(m)} = n^{-1} \sum_{i \in \mS_{(m)} }X_i Y_i$ are sample mean of $ n $ independent random variables. Then by central limit theorem, we should have $ \sqrt{n} (\widehat{\Sigma}_{xy}^{(m)}  -\widehat{\Sigma}_{xx}^{(m)}\theta_0 ) \to_d N(\0, \sigma^2\Sigma_{xx}) $ as $ N\to \infty $ for each $ 1\le m\le M $. Write the block diagonal matrix as $\wh \Sigma_{xx}^* = \diag\{ \wh \Sigma_{xx}^{(1)}, \dots, \wh \Sigma_{xx}^{(M)} \} \in \mR^{q\times q} $. Then it follows that $ \sqrt{n} \big\{ \wh\Sigma_{xy}^* - \wh\Sigma_{xx}^*(\1_M \otimes \theta_0)\big\}\to_d N_q(\0,\sigma^2 I_M \otimes \Sigma_{xx} ) $, where $ \widehat \Sigma_{xy}^* =   \big(\widehat\Sigma_{xy}^{(1)\top}, \dots, \widehat \Sigma_{xy}^{(M)\top}\big)^\top \in \mR^q$.
Recall that the stable solution has the form $ \widehat \theta^* = \alpha\wh \Omega ^{-1} \widehat \Sigma_{xy}^* $. Consequently, $\sqrt{n} \big\{\widehat \theta^* - \wh \mu(\alpha) \big\} \to_d N_q\big(\0, \sigma^2 V(\alpha) \big)$, where $ \wh \mu(\alpha) =  \alpha\wh \Omega ^{-1} \wh\Sigma_{xx}^*(\1_M \otimes \theta_0)$, and $ V(\alpha) = \alpha^2 \Omega^{-1} (I_M \otimes \Sigma_{xx}) \Omega^{-1} $. Here, $ \Omega $ is defined in \eqref{Omega}, and it is the probabilistic limit of $\wh \Omega $, i.e., $ \wh \Omega \to_p \Omega $ as $ N\to \infty $. In the following steps, we investigate $ \wh \mu(\alpha) $ and $ V(\alpha) $ subsequently.

\textsc{Step 1.} By the asymptotic form of $ \wh \Omega^{-1} $ in \eqref{Omega_inv}, we can show that $ \wh \mu(\alpha) = \1_M\otimes \theta_0 - \alpha^2 \wh H_1 \wh B H_2 \wh B \wh H_1 \wh B H_2 \wh \Sigma_{xx}^* (\1_M\otimes \theta_0) +O_p(\alpha^2)$. This is due to the fact that $ \wh H_1  \wh \Sigma_{xx}^* (\1_M\otimes \theta_0) = \1_M\otimes \theta_0$ and $ \wh B \wh H_1  \wh \Sigma_{xx}^* (\1_M\otimes \theta_0) = \wh \Sigma_{xx}^* (\1_M\otimes \theta_0)$.
Next, we can rewrite $ \wh B $ as $ \wh B = J^{M-1}\otimes \wh \Sigma_{xx} + \wh R $, where $ J \in \mR^{M\times M}$ is the permutation matrix defined in \eqref{permutation}, and
\begin{align*}
	\wh R = \begin{bmatrix}
		\0&\0&\cdots&\0&\0&\widehat{\Sigma}_{xx}^{(1)} - \wh\Sigma_{xx}\\
		\widehat{\Sigma}_{xx}^{(2)} -\wh\Sigma_{xx}&\0&\cdots&\0&\0&\0\\
		\0&\widehat{\Sigma}_{xx}^{(3)} -\wh \Sigma_{xx}&\cdots&\0&\0&\0\\
		\vdots&\vdots&\vdots&\vdots&\vdots&\vdots\\
		\0&\0&\cdots&\0&\widehat{\Sigma}_{xx}^{(M)} - \wh\Sigma_{xx}&\0\\
	\end{bmatrix}\in \mR^{q\times q}.
\end{align*}
By central limit theorem, it can be shown that $ \|\wh R \| = O_p(n^{-1/2}) $.
As derived in Appendix A.6, we have $ H_2 = Q \otimes I_p $, where $ Q = (a, J^{M-1}a,\dots, Ja)^\top \in \mR^{M\times M} $ for some vector $ a\in \mR^M $ satisfying $ \1_M^\top a = \0 $.
Then we can verify that $ \1_M\1_M^\top J^{M-1} Q =  \1_M\1_M^\top Q = \0$.
Consequently, it follows that $ \wh H_1 \wh B H_2 \wh B \wh H_1 \wh B H_2 =   \wh H_1 \wh R H_2 \wh R \wh H_1 \wh R H_2$.
Thus, we should have $ \|\wh H_1 \wh B H_2 \wh B \wh H_1 \wh B H_2\| = O_p(n^{-3/2}) $.
Finally, we obtain that $ \wh \mu(\alpha) = \1_M\otimes \theta_0 + O_p(\alpha^2 n^{-3/2}) $.

\textsc{Step 2.} By law of large numbers, we have $ \Sigma_{xx} \to_p \Sigma_{xx} $ and $ \Sigma_{xx}^{(m)} \to_p \Sigma_{xx} $ as $ N\to \infty $ for each $ 1\le m\le M $. Then we can obtain $ \wh B \to_p B$ and $\wh H_1 \to_p H_1 $ as $ N\to \infty $, where $ B = J^{M-1} \otimes \Sigma_{xx} $ and $ H_1 = P_1 \Sigma_{xx}^{-1} P_1^\top $. By replacing $ \wh B $ and $ \wh H_1 $ by $ B $ and $ H_1 $ in \eqref{Omega_inv} respectively, we can derive that $ \Omega^{-1}  = \alpha^{-1} H_1 + H_2 - \alpha H_2 B H_2 +O(\alpha^2) $. Here, we have used the fact that $ H_1 B H_2 = H_2 B H_1 = \0$. Consequently, we have $ V(\alpha)  = H_1 (I_M\otimes \Sigma_{xx}) H_1^\top + \alpha^2 H_2 (I_M\otimes \Sigma_{xx}) H_2^\top +O(\alpha^4) = M^{-1} (\1_M\1_M^\top) \otimes \Sigma_{xx}^{-1} + \alpha^2 (QQ^\top) \otimes \Sigma_{xx} +O(\alpha^4)$.

Let $ e_m \in \mR^M $ be the unit vector with $ 1 $ in the $ m $-th entry. Then we should have $ \wh \theta^{(m)} = (e_m \otimes \1_p)^\top \wh \theta^* $. In addition, $  (e_m \otimes \1_p)^\top \wh \mu(\alpha) = \theta_0 + O_p(\alpha^2 n^{-3/2}) $, and $ (e_m \otimes \1_p)^\top V(\alpha)(e_m \otimes \1_p) = M^{-1}\Sigma_{xx}^{-1} +\alpha^2 \|a\|^2 \Sigma_{xx} +O(\alpha^4) $, where $ \|a\| $ can be calculated by \eqref{eq:a_gamma>0} and \eqref{eq:a_gamma=0} in Appendix A.6. Thus, $  \sqrt{n} \big\{(\wh \theta^{(m)} - \theta_0 + O_p(\alpha^2 n^{-3/2}) \big\} \to_d N_p\Big(\0, \sigma^2 \big\{M^{-1}\Sigma_{xx}^{-1} +\alpha^2 \|a\|^2 \Sigma_{xx} +O(\alpha^4) \big\} \Big) $. Consequently,
\begin{equation*}
	 \sqrt{N}  \Big\{(\wh \theta^{(m)} - \theta_0 + O_p(\alpha^2 n^{-3/2}) \Big\} \to_d N_p\bigg(\0, \sigma^2 \Big\{\Sigma_{xx}^{-1} +\alpha^2 M \|a\|^2 \Sigma_{xx} +O(\alpha^4) \Big\} \bigg).
\end{equation*}
Further note that $ \|a\| $ is equal to the $ l_2 $-norm of the first row of $ H_2 = Q\otimes I_p $. We thus complete the proof.

\scsubsection{Appendix A.6 The Explicit form of $ Q $}

One can verify that $ A = \wt A \otimes I_p$, where
\begin{equation*}
	\wt A = \begin{bmatrix}
		1 & 0 &0&  \cdots &0& \gamma & -(1+\gamma)\\
	-(1+\gamma) & 1  &0&\cdots &0& 0 & \gamma\\
	\gamma & -(1+\gamma)  &1&\cdots & 0 & 0&0\\
	\vdots & \vdots  & \vdots & \vdots & \vdots& \vdots& \vdots\\
	0 & 0 &0&\cdots  &\gamma& -(1+\gamma) & 1
	\end{bmatrix}\in\mR^{M\times M}
\end{equation*}
is a circulant matrix.
Then we should have $ H_2 = P_2 (A_{22}^P)^{-1} P_2^\top = (Z\otimes I_p) \big\{ (Z^\top \otimes I_p)(\wt A\otimes I_p)(Z\otimes I_p) \big\}^{-1}  (Z^\top \otimes I_p) =Q\otimes I_p$, where $ Q = Z(Z^\top \wt A Z)^{-1}Z^\top $, and $ Z $ is defined in \eqref{Z_mat}.
It can be easily verified that $Q\wt{A}Q = Q$, which implies $Q$ is the generalized inverse of $\wt{A}$. Then it suffices to find the generalized inverse of $\wt{A}$. Since $\rank(\wt{A}) = M-1$ and $\1_M$ is the eigenvector of zero eigenvalue, according to Theorem 2 in \cite{searle1979inverting}, $Q$ has the following form
\begin{equation}\label{1}
	Q = \{\wt{A} + (1+\gamma)\1_M\1_M^\top\}^{-1} - \frac{1}{(1+\gamma)M^2}\1_M\1_M^\top.
\end{equation}
Since the inverse of a circulant matrix is also a circulant matrix, we must have $\{\wt{A} + (1+\gamma)\1_M\1^\top_M\}^{-1}$ be a circulant matrix, which has the form of
\begin{equation*}
\{\wt{A} + (1+\gamma)\1_M\1^\top_M\}^{-1} =
\begin{bmatrix}
	x_0 & x_1 & x_2 & \cdots & x_{M-1}\\
	x_{M-1} & x_0 & x_1 & \cdots & x_{M-2}\\
	x_{M-2} & x_{M-1} & x_0 & \cdots & x_{M-3}\\
	\vdots & \vdots & \vdots & \ddots & \vdots\\
	x_1 & x_2 & x_3 & \cdots & x_0
\end{bmatrix}.
\end{equation*}
Since $\{\wt{A} + (1+\gamma)\1_M\1^\top_M\}\{\wt{A} + (1+\gamma)\1_M\1^\top_M\}^{-1} = I_M$, we use the first row of $\{\wt{A} + (1+\gamma)\1_M\1^\top_M\}$ to multiply all columns of $\{\wt{A} + (1+\gamma)\1_M\1^\top_M\}^{-1}$. Then we have the following $M$ equations,
\begin{equation*}
\begin{cases}
	(2+\gamma)x_0 + (1+\gamma)x_{M-1} + \cdots + (1+\gamma)x_3+ (1+2\gamma)x_2 + 0x_1 = 1\\
	(2+\gamma)x_1 + (1+\gamma)x_0 + \cdots + (1+\gamma)x_4+ (1+2\gamma)x_3 + 0x_2 = 0\\
	\cdots\\
	(2+\gamma)x_{M-1} + (1+\gamma)x_{M-2} + \cdots + (1+\gamma)x_2+ (1+2\gamma)x_1 + 0x_0 = 0.
\end{cases}
\end{equation*}
Sum up all the equations above, we have $(1+\gamma)\sum^{M-1}_{m=0}x_m = 1/M$. Use all $M$ equations above minus $(1+\gamma)\sum^{M-1}_{m=0}x_m = 1/M$, we then have
\begin{equation}\label{2}
\begin{cases}
	x_0 - (1+\gamma)x_1 + \gamma x_2  =(x_0 - x_1) - \gamma(x_1 - x_2) = 1-\frac{1}{M}\\
	x_1 - (1+\gamma)x_2 + \gamma x_3  =(x_1 - x_2) - \gamma(x_2 - x_3) = -\frac{1}{M}\\
	\cdots\\
	x_{M-1} - (1+\gamma)x_0 + \gamma x_1  =(x_{M-1} - x_0) - \gamma(x_0 - x_1) = -\frac{1}{M}.
\end{cases}
\end{equation}
We then study the cases of $ \gamma>0 $ and $ \gamma = 0 $ subsequently.

{\sc Case 1 $( \gamma>0) $.} From equations at \eqref{2}, we can verify that
\begin{equation}\label{3}
x_k - x_{k+1} = \frac{1}{\gamma^k}(x_{0}-x_{1})  + \frac{1}{M}S_k - \frac{1}{\gamma^k} \text{\ for\ } k = 1,2,...,M-1,
\end{equation}
where $S_m = \sum^m_{k=1}\gamma^{-k}$.
Sum up all the equations in \eqref{3}, we have
\begin{equation*}
0 = \sum^{M-1}_{m=0}(x_m - x_{m+1}) =\gamma S_M (x_0-x_1) + \frac{1}{M}\sum^{M-1}_{m=1}S_m - S_{M-1}.
\end{equation*}
Therefore, we can obtain that
\begin{equation} \label{eq:beta}
\beta = x_0-x_1 = \frac{MS_{M-1} - \sum^{M-1}_{m=1}S_m}{M\gamma S_M}.
\end{equation}
Again, by \eqref{3} we can verify that
\begin{equation}\label{4}
x_m = x_0 - \beta\gamma S_m - \frac{1}{M}\sum^{m-1}_{k=1}S_k + S_{m-1} \text{\ for\ } m=2,...,M-1.
\end{equation}
Thus we can verify that
\begin{equation*}
\frac{1}{(1+\gamma)M} = \sum^{M-1}_{m=0}x_m = Mx_0 -a\gamma\sum^{M-1}_{m=1}S_m -\frac{1}{M}\sum^{M-1}_{m=2}\sum^{m-1}_{k=1}S_k + \sum^{M-2}_{m=1}S_m,
\end{equation*}
which implies
\begin{equation*}
x_0 = \frac{1}{(1+\gamma)M^2} +\frac{\beta\gamma}{M}\sum^{M-1}_{m=1}S_m +\frac{1}{M^2}\sum^{M-1}_{m=2}\sum^{m-1}_{k=1}S_k - \frac{1}{M}\sum^{M-2}_{m=1}S_m.
\end{equation*}
Since $Q$ is also a circulant matrix, we assume its first row is $(a_0,...,a_{M-1})$. According to \eqref{1} and \eqref{4}, we have
\begin{equation}\label{eq:a_gamma>0}
	\begin{cases}
		a_0 =& x_0 - \frac{1}{(1+\gamma)M^2} = \frac{\beta\gamma}{M}\sum^{M-1}_{m=1}S_m +\frac{1}{M^2}\sum^{M-1}_{m=2}\sum^{m-1}_{k=1}S_k - \frac{1}{M}\sum^{M-2}_{m=1}S_m\\
		a_1 =& a_0 - \beta\\
		a_m =& a_0 -\beta\gamma S_m - \frac{1}{M}\sum^{m-1}_{k=1}S_k + S_{m-1} \text{\ for\ } m=2,...,M-1,
	\end{cases}
\end{equation}
where $ \beta $ is defined in \eqref{eq:beta}.
Then we can write $ Q = (a, J^{M-1}a,\dots, Ja)^\top $, where $ a=(a_0,...,a_{M-1})^\top \in \mR^M $, and $ J \in \mR^{M\times M}$ is a permutation matrix defined as
\begin{equation} \label{permutation}
	J = \begin{bmatrix}
		0&1&0&\cdots&0&0\\
		0&0&1&\cdots&0&0\\
		\vdots&\vdots&\vdots&\vdots&\vdots&\vdots\\
		0&0&0&\cdots&0&1\\
		1&0&0&\cdots&0&0
	\end{bmatrix}.
\end{equation}

{\sc Case 2 $ (\gamma  = 0) $.} From equations at \eqref{2}, we have
\begin{equation*}
	x_0 - x_1 = 1-\frac{1}{M},\ x_1 - x_2 = -\frac{1}{M},\ \dots, x_{M-1} - x_0 = -\frac{1}{M}.
\end{equation*}
By similar procedures used above, we can obtain that $ x_0 = (M-1)/(2M) + 1/M^2 $ and $ x_m = x_0 - (M-m)/M $ for $ m=1,\dots, M-1 $.
Then by \eqref{1}, we have
\begin{equation}\label{eq:a_gamma=0}
	\begin{cases}
		a_0 =& x_0 - \frac{1}{M^2}= \frac{M-1}{2M} \\
		a_m =& a_0 - \frac{M-m}{M} = \frac{2m - M - 1}{2M} \text{ for } m=1,\dots, M-1 .
	\end{cases}
\end{equation}

Additionally, by \eqref{1} and the fact that $(1+\gamma) \sum_{m=0}^{M-1}x_m = 1/M $, it can be easily verified that $ \1_M^\top a = \sum_{m=0}^{M-1}a_m = 0$ for any $ \gamma\ge 0 $.

\scsection{APPENDIX B}

\begin{lemma} \label{lemma:invertible}
Suppose $ \Sigma_{xx} \in \mR^{p \times p} $ is a symmetric positive definite matrix with eigenvalues $ \lambda_1\ge\cdots\ge\lambda_p>0 $. Consider a block matrix defined by
\begin{align*}
	\Omega = \begin{bmatrix}
		I_p&\0&\0&\cdots&\0&\gamma I_p&-\Delta\\
		-\Delta&I_p&\0&\cdots&\0&\0&\gamma I_p\\
		\gamma I_p&-\Delta&I_p&\0&\cdots&\0&\0\\
		\vdots&\vdots&\vdots&\vdots&\vdots &\vdots&\vdots\\
		\0&\0&\0&\cdots&\gamma I_p&-\Delta&I_p
	\end{bmatrix}\in \mR^{q\times q},
\end{align*}
where $ \Delta = (1+\gamma) I_p - \alpha \Sigma_{xx} \in \mR^{p \times p}$. Assume $ \gamma \ne 1$, $ \alpha\ne 0 $ and $ \alpha \ne 2(1+\gamma)/\lambda_j $ for each $ 1\le j \le p $. Then we have $ \Omega $ is invertible.
\end{lemma}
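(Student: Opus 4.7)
My approach exploits the fact that $\Omega$ is a block-circulant matrix. Reading its first block-row as $(C_0, \dots, C_{M-1})$ with $C_0 = I_p$, $C_{M-2} = \gamma I_p$, $C_{M-1} = -\Delta$ and the remaining blocks zero, one checks that every subsequent row is the cyclic right-shift of its predecessor. The standard block-circulant diagonalization formula therefore gives
$$\det(\Omega) = \prod_{k=0}^{M-1} \det\!\left(\sum_{j=0}^{M-1} \omega_k^{\,j} C_j\right) = \prod_{k=0}^{M-1} \det\!\big(I_p + \gamma\,\omega_k^{M-2} I_p - \omega_k^{M-1}\Delta\big),$$
where $\omega_k = e^{2\pi i k/M}$. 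Multiplying the $k$-th argument by the nonzero scalar $\omega_k^2$ (which contributes only an $\omega_k^{2p}$ prefactor) and substituting $\Delta = (1+\gamma)I_p - \alpha \Sigma_{xx}$ rewrites each factor as $\det\!\big((\omega_k-1)(\omega_k-\gamma)\,I_p + \alpha\,\omega_k\,\Sigma_{xx}\big)$.

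Since $\Sigma_{xx}$ is symmetric with eigenvalues $\lambda_1, \dots, \lambda_p$, orthogonally diagonalizing it expresses $\det(\Omega)$, up to a nonzero scalar, as
$$\prod_{k=0}^{M-1}\prod_{j=1}^{p}\Big[(\omega_k-1)(\omega_k-\gamma) + \alpha\,\omega_k\,\lambda_j\Big].$$
The task reduces to showing that no scalar factor vanishes. Fixing $(k,j)$ and setting the factor to zero yields the real quadratic $z^2 + [\alpha\lambda_j - (1+\gamma)]\,z + \gamma = 0$ having $z = \omega_k$ as a root. If $\omega_k$ is not real, its complex conjugate must be the other root, so their product forces $|\omega_k|^2 = 1 = \gamma$, contradicting $\gamma \ne 1$. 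Hence $\omega_k \in \{+1,-1\}$. The case $\omega_k = 1$ reduces the factor to $\alpha\lambda_j$, nonzero because $\alpha \ne 0$ and $\lambda_j > 0$. The case $\omega_k = -1$, possible only for even $M$ with $k = M/2$, makes the factor equal $2(1+\gamma) - \alpha\lambda_j$, nonzero by the exclusion $\alpha \ne 2(1+\gamma)/\lambda_j$. Therefore every factor is nonzero, $\det(\Omega) \ne 0$, and $\Omega$ is invertible.

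The main obstacle I anticipate is setting up the block-circulant diagonalization cleanly and correctly tracking the $\omega_k^{2p}$ prefactors. A basis-free alternative I would keep in reserve is to verify directly that vectors of the form $u \otimes v$, where $u \in \mathbb{C}^M$ is an eigenvector of the cyclic-shift permutation and $v \in \mathbb{R}^p$ is an eigenvector of $\Sigma_{xx}$, are eigenvectors of $\Omega$ with eigenvalues of exactly the form above, bypassing determinant manipulation entirely. The substantive insight — that $\gamma \ne 1$ is precisely what rules out non-real unit-modulus roots while the two conditions on $\alpha$ precisely rule out $\omega_k = \pm 1$ — is identical in either approach.
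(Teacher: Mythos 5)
Your proof is correct and follows essentially the same route as the paper's: both exploit the block-circulant structure of $\Omega$, reduce to the scalar factors obtained by substituting the $M$-th roots of unity and the eigenvalues of $\Sigma_{xx}$, multiply through by $\omega_k^2$ to get the real quadratic $z^2+[\alpha\lambda_j-(1+\gamma)]z+\gamma$, and then rule out non-real roots via $|\omega_k|^2=\gamma\ne 1$ and the roots $\pm 1$ via the assumptions on $\alpha$. Your write-up is in fact slightly more explicit than the paper's in evaluating the $\omega_k=\pm 1$ cases, but the argument is the same.
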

\begin{proof}
To prove the invertibility of $ \Omega $, we only need to prove that $ 0 $ is not an eigenvalue of $ \Omega $ under the assumed conditions. To this end, we first note that $ \Omega $ is a block circulant matrix. Hence, we can analyze the eigenvalues of $ \Omega $ with help of the results in \cite{kaveh2011block}. Specifically, we denote $ \omega_m =  (e^{2\pi i /M})^{m-1} \in \mathbb{C}$ for $ 1\le m\le M $, where $ i =\sqrt{-1}$ is the imaginary unit. We can see that $ |\omega_m|=1 $ and $ \omega_m^M=1 $ for each $ 1\le m \le M $. Define $ H(\omega) = I_p + \omega^{M-2} \gamma I_p - \omega^{M-1} \Delta$ as a map from $ \mathbb{C} $ to $ \mathbb{C}^{p \times p} $. Then by the results in \cite{kaveh2011block}, the eigenvalues of $ \Omega $ will be the union of the eigenvalues of matrices $ H(\omega_m)$ for $1\le m\le M $. We next show that $ 0 $ cannot be an eigenvalue of $ H(\omega_m) $ for any $ 1\le m\le M $. Since $  \Sigma_{xx} $ has eigenvalues $ \lambda_1, \dots, \lambda_p $, the eigenvalues of $ H(\omega_m) $ should be $ h(\omega_m) =1 + \omega_m^{M-2} \gamma - \omega_m^{M-1} (1+\gamma - \alpha \lambda_j)$ for $ 1\le j\le p $.  Suppose $ h(\omega_m) = 0 $ for some $ m $. Then we have $0=\omega_m^2 h(\omega_m) = \omega_m^2 + (\alpha\lambda_j - 1- \gamma) \omega_m + \gamma$, i.e., $ \omega_m $ is the root of the quadratic equation $ \omega^2 + (\alpha\lambda_j - 1- \gamma) \omega + \gamma=0$ with real coefficients. (1) Suppose $ \omega_m \notin\mR$. Then we must have $|\omega_m|^2= \omega_m \widebar{\omega}_m  = \gamma \ne 1$. This is a contradiction. (2) Suppose $ \omega_m  \in \mR$ . Then it must be $ -1 $ or $ 1 $. But this is impossible under the conditions $ \alpha\ne 0 $ and $ \alpha \ne 2(1+\gamma)/\lambda_j $ for each $ 1\le j \le p $. Thus, we have proved the result.	
\end{proof}

\begin{lemma}\label{lemma:SR}
Suppose $ \Sigma_{xx} \in \mR^{p \times p} $ is a symmetric positive definite matrix with eigenvalues $ \lambda_1\ge\cdots\ge\lambda_p>0 $. Consider a block matrix defined $ D = [\Delta, -\gamma I_p; I_p, \0] \in \mR^{(2p)\times (2p)}$, where $ \Delta = (1+\gamma)I_p - \alpha \Sigma_{xx}$, $ \alpha>0 $ and $ \gamma\ge0 $.
\begin{enumerate} [(a)]
	\item Suppose that $ 0<\alpha<2(1+\gamma)/\lambda_{1}  $. Then, $ \gamma\le\rho(D)<1 $ if $ 0\le \gamma <1 $, $ \rho(D)>1 $ if $ \gamma>1 $, and $ \rho(D)=1 $ if $ \gamma=1 $.

	\item If $ \gamma=0 $, then $\rho(D) =   \max\{|1-\alpha\lambda_1|,  |1-\alpha\lambda_p| \} $.
	When $ \alpha = 2/ (\lambda_{1}+\lambda_p) $, $ \rho(D) $ attains the minimum value as $\rho(D)=(\lambda_1 - \lambda_p) /  (\lambda_1 + \lambda_p)$.
	
	\item If $ \gamma \in [(1-\sqrt{\alpha\lambda_j })^2, (1+\sqrt{\alpha\lambda_j })^2] $ for all $ 1\le j\le p $, then $ \rho(D) =  \sqrt{\gamma} $.
	When $ \alpha = 4/(\sqrt{\lambda_{1} }+ \sqrt{\lambda_p})^2 $ and $ \gamma = (\sqrt{\lambda_{1}} - \sqrt{\lambda_p})^2 / (\sqrt{\lambda_{1}} + \sqrt{\lambda_p})^2 $, then $ \rho(D) = (\sqrt{\lambda_{1}} - \sqrt{\lambda_p}) / (\sqrt{\lambda_{1}} + \sqrt{\lambda_p})$.

	\item Suppose $ 0<\alpha< 1/\lambda_1$. If $ \gamma = 0 $, then $ \rho(D) = 1-\alpha \lambda_p $; If $0<  \gamma <(1-{\alpha \lambda_p})^2 $, then $\rho < 1-\alpha \lambda_p$. When $ \gamma = (1-\sqrt{\alpha \lambda_p})^2$, $ \rho(D)  $ attains the minimum value as $\rho(D)= 1-\sqrt{\alpha \lambda_p} $.
	
	\item If $ \alpha> 2(1+\gamma)/ \lambda_1 $ or $ \gamma>1 $, then $ \rho(D)>1 $.

\end{enumerate}
\end{lemma}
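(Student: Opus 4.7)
The plan is to block-diagonalize $D$ using the spectral decomposition of $\Sigma_{xx}$. Writing $\Sigma_{xx} = U\Lambda U^\top$ with $\Lambda = \diag(\lambda_1,\ldots,\lambda_p)$, conjugating $D$ by the block-orthogonal matrix $\mathcal{U} = \diag(U,U)$ and then reordering rows and columns to group the $j$-th diagonal entry of each block yields the direct sum $\bigoplus_{j=1}^{p} D_j$, where
\begin{equation*}
D_j = \begin{bmatrix} 1+\gamma-\alpha\lambda_j & -\gamma \\ 1 & 0 \end{bmatrix}, \qquad 1 \le j \le p.
\end{equation*}
Thus $\rho(D) = \max_{1 \le j \le p} \rho(D_j)$, and the problem reduces to analyzing, for each $j$, the roots of the characteristic polynomial $\mu^2 - s_j\mu + \gamma = 0$ with $s_j = 1+\gamma - \alpha\lambda_j$. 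Vieta's formulas give $\mu_j^+ \mu_j^- = \gamma$ and $\mu_j^+ + \mu_j^- = s_j$, so in the complex-conjugate case (discriminant $s_j^2 - 4\gamma < 0$) both roots have modulus $\sqrt{\gamma}$, while in the real-root case the larger modulus equals $(|s_j| + \sqrt{s_j^2 - 4\gamma})/2$.

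The key algebraic fact I would extract and reuse is that, for any $c > 0$ with $\gamma < c^2$,
\begin{equation*}
\rho(D_j) < c \iff |s_j| < c + \gamma/c,
\end{equation*}
obtained by isolating the square root in the real case and squaring (the side condition $|s_j| < 2c$ being automatic from $\gamma < c^2$), and by the AM-GM inequality $c + \gamma/c \ge 2\sqrt{\gamma}$ in the complex case. Parts (a), (b), (c), and (e) then follow mechanically. For (a), $0 < \alpha < 2(1+\gamma)/\lambda_1$ gives $|s_j| < 1+\gamma$ for every $j$, so the criterion with $c=1$ yields $\rho(D_j) < 1$; the bound $\rho(D) \ge \sqrt{\gamma} \ge \gamma$ comes from $|\mu_j^+ \mu_j^-| = \gamma$, which forces at least one root to have modulus $\ge \sqrt{\gamma}$. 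For (b), $\gamma = 0$ makes each $D_j$ upper triangular with eigenvalues $\{1-\alpha\lambda_j, 0\}$, reducing the optimal tuning to the classical $\alpha = 2/(\lambda_1+\lambda_p)$ minimax. For (c), the hypothesis forces every discriminant to be non-positive, so $\rho(D_j) = \sqrt{\gamma}$ for all $j$, and minimizing $\sqrt{\gamma}$ subject to $\sqrt{\gamma} \ge \max(|1-\sqrt{\alpha\lambda_1}|,|1-\sqrt{\alpha\lambda_p}|)$ balances the two extremal constraints at $\alpha = 4/(\sqrt{\lambda_1}+\sqrt{\lambda_p})^2$. For (e), either $\alpha > 2(1+\gamma)/\lambda_1$ forces $|s_1| > 1+\gamma$, so the criterion with $c=1$ fails and in fact yields $\rho(D_1) > 1$ strictly, or $\gamma > 1$ gives $\rho(D_j) \ge \sqrt{\gamma} > 1$.

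The main obstacle is part (d), where the bound $\rho(D) < 1-\alpha\lambda_p$ must be proved uniformly for $0 < \gamma < (1-\alpha\lambda_p)^2$, a range that straddles both the real- and complex-root regimes for $j = p$. I would first show that the maximum over $j$ is attained at $j = p$: under $\alpha\lambda_j < 1 \le 1+\gamma$ every $s_j > 0$, so $\mu_j^+$ is increasing in $s_j$ and hence decreasing in $\lambda_j$, while the complex-case value $\sqrt{\gamma}$ is independent of $j$ and is dominated by $\mu_p^+$ in the real case. Applying the equivalence with $c = 1-\alpha\lambda_p$ then reduces the claim to $s_p < (1-\alpha\lambda_p) + \gamma/(1-\alpha\lambda_p)$; substituting $s_p = 1+\gamma - \alpha\lambda_p$ simplifies this to $\gamma < \gamma/(1-\alpha\lambda_p)$, which holds because $0 < 1-\alpha\lambda_p < 1$. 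Finally, to locate the optimum $\gamma^\star = (1-\sqrt{\alpha\lambda_p})^2$, I would verify by a short derivative computation that $\mu_p^+$ is strictly decreasing in $\gamma$ on $[0,\gamma^\star]$, equals $1-\sqrt{\alpha\lambda_p}$ at $\gamma^\star$ (where the discriminant vanishes and $\mu_p^+ = s_p/2 = \sqrt{\gamma^\star}$), and that for $\gamma \ge \gamma^\star$ one has $\rho(D) = \sqrt{\gamma} \ge 1-\sqrt{\alpha\lambda_p}$; hence the minimum value $1-\sqrt{\alpha\lambda_p}$ is attained precisely at $\gamma^\star$.
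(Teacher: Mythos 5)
Your proof is correct, and while it rests on the same core reduction as the paper's --- both replace $D$ by the $2\times 2$ companion blocks with characteristic polynomial $\mu^2-s_j\mu+\gamma$, $s_j=1+\gamma-\alpha\lambda_j$ (the paper's $G(\lambda_j)$), so that $\rho(D)=\max_j\rho(D_j)$ --- you organize the subsequent root analysis quite differently. The paper treats each part with a separate ad hoc argument: an eigenvector computation plus Weyl's inequality and the factorization $(\xi-1)(\xi-\gamma)<0$ for (a), a two-regime case split over an index $j_1$ with Vieta's formulas for (d), and a direct root estimate for (e). You instead extract a single reusable criterion, $\rho(D_j)<c\iff |s_j|<c+\gamma/c$ whenever $\gamma<c^2$, and derive (a), (d), and (e) as instances of it. This buys a genuinely shorter proof of the hardest inequality, (d): with $c=1-\alpha\lambda_p$ one has $s_p=c+\gamma<c+\gamma/c$ because $0<c<1$, a one-line verification that replaces the paper's discussion of which indices fall in the real versus complex root regime. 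Your treatment of the optimality of $\gamma^\star=(1-\sqrt{\alpha\lambda_p})^2$ (strict monotonicity of $\mu_p^+$ in $\gamma$ on $[0,\gamma^\star]$, together with $\rho(D)\ge\sqrt{\gamma}\ge\sqrt{\gamma^\star}$ for $\gamma\ge\gamma^\star$ from the product of the roots) is also valid and somewhat more transparent than the paper's sign check $h(1-\sqrt{\alpha\lambda_p})<0$. Two cosmetic points: with $\gamma=0$ the blocks $D_j$ are lower, not upper, triangular (the eigenvalues $\{1-\alpha\lambda_j,0\}$ are of course unchanged), and for the boundary case $\gamma=1$ in (a) you should note explicitly that $0<\alpha<2(1+\gamma)/\lambda_1$ forces $s_j^2-4<0$ for every $j$, so all roots have modulus exactly $1$; your stated criterion requires $\gamma<c^2$ strictly and so does not cover this case by itself.
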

\begin{proof} We prove (a)-(e) subsequently.

\noindent\textbf{Proof of (a).} We only consider the case of $0\le \gamma <1 $, since the case of $ \gamma\ge1 $ can be derived similarly. Suppose $ \xi \in \mathbb{C}$ is an arbitrary eigenvalue of $ D $, and $ v = (v_1^\top, v_2^\top)^\top \in \mathbb{C}^{2p} $ is the corresponding eigenvector, where $v_1\in \mathbb{C}^p  $ and $v_2\in \mathbb{C}^p  $. By $ Dv = \xi v $, we have $ \Delta v_1 - \gamma v_2 = \xi v_1 $ and $ v_1 = \xi v_2 $.
(1) If $ \gamma = 0 $, one can derive that the nonzero eigenvalues of $ D $ are the same as those of $ \Delta $. Then by assumed condition $ 0<\alpha<2(1+\gamma)/\lambda_{1}  $, we should have $ \rho(D) = \rho(\Delta)=\max_{1\le j\le p}|1-\alpha\lambda_j|<1  $ for all $ 1\le j\le p $. Consequently, $ \rho(D) \in [0, 1) = [\gamma, 1) $.
(2)  If $ 0<\gamma<1  $, one can verify that $ D $ is nonsingular. Thus $ \xi \ne 0 $. Then we have $ \Delta v_1 = (\gamma / \xi + \xi)v_1 $. In other words, $ \gamma / \xi + \xi $ is the eigenvalue of the real symmetric matrix $ \Delta $. Hence  $ \gamma / \xi + \xi  $ must be a real number. (2.1) Suppose that $ \xi \notin \mR$, i.e., $ \xi = a+bi $ with $ b \ne 0 $. Since $ \gamma / \xi + \xi  = (\gamma / |\xi|^2) \widebar{\xi} + \xi \in \mR $, we should have $ |\xi|^2 = \gamma $. Thus $ |\xi| = \sqrt{\gamma}$. (2.2) Suppose $ \xi \in \mR $. By Weyl's inequality and condition $ 0<\alpha < 2(1+\gamma)/\lambda_1$ , we know that the eigenvalues of $ \Delta$ are contained in the interval $ \big(-(1+\gamma), 1+\gamma \big) $. Thus we should have $-(1+\gamma) < (\gamma / \xi + \xi)<1+\gamma $. (2.2.1) If $ \xi>0 $, by $ (\gamma / \xi + \xi)<1+\gamma $ we have $ \xi^2 - (1+\gamma)\xi + \gamma = (\xi-1)(\xi-\gamma)<0 $, or equivalently, $ \xi \in (\gamma, 1) $. (2.2.2) If $ \xi<0 $, by $ -(1+\gamma) < (\gamma / \xi + \xi) $ we have $ \xi^2 + (1+\gamma)\xi + \gamma = (\xi+1)(\xi+\gamma)<0 $, or equivalently, $ \xi \in (-1, -\gamma) $. Thus, if $ \xi \in \mR $, then $ |\xi| \in (\gamma,1) $. Since $ \sqrt{\gamma}\in (\gamma, 1) $ if $ 0<\gamma<1 $, we conclude that $ \rho(D)\in (\gamma, 1) $.
By (1) and (2) above, we have proved that $ \rho(D) \in [\gamma, 1) $ if $ 0<\alpha<2(1+\gamma)/\lambda_{1}  $ and $ 0\le \gamma <1$.

\noindent\textbf{Proof of (b).}  Similar to the proof of (a), $ D $ has the same nonzero eigenvalues as those of $ \Delta $ if $ \gamma = 0 $. Since the eigenvalues of $ \Delta $ are $ 1 - \alpha \lambda_j$ for $ 1\le j \le p $, claims in (b) follow immediately.

\noindent \textbf{Proof of (c).} The proof can be found in \cite{polyak1964}. We give a simple version for the sake of completeness. If $ \gamma=0 $, then we must have $ \alpha \lambda_j = 1 $ for all $ 1\le j\le p $ under the assumed conditions. One can verify that $ \Delta $ will be a zero matrix in this case. Consequently, the eigenvalues of $ D $ are all zero, i.e., $ \rho(D)=0 $. We then consider the case of $ \gamma>0 $. First, note that when $ \gamma>0 $, $ D $ is nonsingular, i.e., $ D $ has no zero eigenvalue. Then we construct a matrix function of $ \lambda $ as
\begin{equation*}
	G(\lambda) = \begin{bmatrix}
		1+\gamma - \alpha \lambda& -\gamma\\
		1&0
	\end{bmatrix} \in \mR^{2\times 2}.
\end{equation*}
One can verify that the roots of characteristic polynomial $ \det(\xi I_{2p} - D) $ are the same as the roots of the $ p $ characteristic polynomials $ \det(\xi I_{2} - G(\lambda_j)) $ ($ 1\le j\le p $). Hence the eigenvalues of $ D $ are the union of the eigenvalues of $ G(\lambda_j) $ for $ 1\le j \le p $. Therefore, we turn to investigate the spectrum of $ G(\lambda) $. The characteristic polynomial of $ D $ is $ \det(\xi I_{2} - G(\lambda)) = \xi^2 -(1+\gamma - \alpha \lambda)\xi + \gamma$. We find that the discriminant of the corresponding quadratic equation $ (1+\gamma - \alpha \lambda)^2 - 4\gamma \le 0 $ if and only if $ \gamma \in  [(1-\sqrt{\alpha\lambda})^2, (1+\sqrt{\alpha\lambda})^2]$. In this case, the roots of this equation have the same magnitude $ \sqrt{\gamma} $. The first claim in (b) follows immediately. The second claim can be obtained by direct calculations for the given $ \alpha $ and $ \gamma $. Thus we have proved (c).

\noindent\textbf{Proof of (d).} The case of $ \gamma =0$ follows immediately from (b). We then investigate the case of $0< \gamma<(1-\alpha \lambda_p)^2 $.
(1) First, if $ \gamma \in \big[(1-\sqrt{\alpha \lambda_p})^2, (1-\alpha\lambda_p)^2 \big) $, then $ \gamma \in \big[  (1-\sqrt{\alpha \lambda_j})^2, (1+\sqrt{\alpha \lambda_j})^2 \big]$ for all $ 1 \le j \le p $ under the condition $ 0<\alpha< 1/\lambda_1$.
Then by (c), we know that $ \rho(D) = \sqrt{\gamma} $. Hence, $ \rho(D)<1-\alpha\lambda_p $ if $ \gamma \in \big[(1-\sqrt{\alpha \lambda_p})^2, (1-\alpha\lambda_p)^2 \big) $.
(2) We then consider the case of $0< \gamma <  (1-\sqrt{\alpha \lambda_p})^2 $.
In this case,  there must be some $ 1 \le j_1 \le p $ such that $ \gamma \ge  (1-\sqrt{\alpha \lambda_j})^2$ for $ j < j_1 $, and $ \gamma < (1-\sqrt{\alpha \lambda_j})^2$ for $ j \ge j_1 $.
(2.1) For $ j < j_1 $, it follows the proof of (c) that $ \rho(G(\lambda_j)) = \sqrt{\gamma} < 1-\sqrt{\alpha \lambda} <1- \alpha \lambda_p  $.
(2.2) For $ j\ge j_1 $, the characteristic equation  $  \xi^2 -(1+\gamma - \alpha \lambda_j)\xi + \gamma = 0 $ should have two different real roots $ \xi_1, \xi_2 $, since the discriminant of this equation is larger than $ 0 $.
Then by Vieta's formulas, we have $ \xi_1 +\xi_2 = 1 +\gamma - \alpha \lambda_j >0$ and $ \xi_1 \xi_2 = \gamma>0$. Thus, both $ \xi_1 $ and $ \xi_2 $ should be positive.
By (a),  we cam conclude that $ \xi_1, \xi_2 $ should be contained in $ [\gamma,1) $.
Since $ \xi_1 \xi_2 = \gamma  $, we can further verify that $ \xi_1, \xi_2 $ must be strictly larger that $ \gamma $, i.e., $ \xi_1, \xi_2 \in (\gamma,1) $.
Consequently, $ \xi_1 = (1+\gamma - \alpha \lambda_j) - \xi_2 < 1-  \alpha \lambda_j \le1- \alpha \lambda_p$. Obviously, we also have $  \xi_2 < 1-  \alpha \lambda_p$.
Therefore, we should have $ \rho(G(\lambda_j))<1-\alpha\lambda_p $ for all $ j_1< j \le p $.
It follows from (2.1) and (2.2) that $ \rho(D)<1-\alpha\lambda_p $ if $ \gamma \in \big(0, (1-\sqrt{\alpha \lambda_p})^2 \big) $.
Hence by (1) and (2), we have proved that $ \rho(D)<1-\alpha\lambda_p $ if $0< \gamma < (1-\alpha \lambda_p)^2 $.

From (1) above, we know that $ \rho(D) = 1-\sqrt{\alpha\lambda_p} $ if $ \gamma =  (1-\sqrt{\alpha\lambda_p})^2$, and $ \rho(D)>1-\sqrt{\alpha\lambda_p}  $ if $ \gamma \in \big((1-\sqrt{\alpha \lambda_p})^2, (1-\alpha\lambda_p)^2 \big) $.
Hence, to prove that the minimal $ \rho(D) $ is achieved by choosing $ \gamma =  (1-\sqrt{\alpha\lambda_p})^2$, we only need to verify that if $ 0< \gamma < (1-\sqrt{\alpha \lambda_p})^2 $, then $ \rho(D) > 1-\sqrt{\alpha\lambda_p} $.
In fact, if $0< \gamma < (1-\sqrt{\alpha \lambda_p})^2 $, one can verify that the equation  $h(\xi) =   \xi^2 -(1+\gamma - \alpha \lambda_p)\xi + \gamma = 0 $ has two different positive real roots $ \xi_1,\xi_2 $ by similar arguments in (2.2) above.
To prove that one of $ \xi_1 $ and $ \xi_2 $ is larger that $  1-\sqrt{\alpha \lambda_p}$, it suffices to show $h( 1-\sqrt{\alpha \lambda_p})<0$.
This is true, since it can be easily verified that $ h( 1-\sqrt{\alpha \lambda_p}) = - (1-\sqrt{\alpha \lambda_p})^2\sqrt{\alpha \lambda_{p}} + \gamma \sqrt{\alpha\lambda_p }<0$. Thus, we complete the proof of (d).

\noindent\textbf{Proof of (e).} By arguments used above, one can easily show that, if $ \gamma>1 $, then $ \rho(D)>1 $.
Hence, we only prove the result under $ \alpha> 2(1+\gamma) / \lambda_1 $ and $ 0\le\gamma\le 1 $.
The case of $ \gamma = 0 $ is also trivial, hence we only deal with the case of $ 0<\gamma\le 1 $.
Consider the characteristic equation $ \det(\xi I_{2} - G(\lambda_1)) = \xi^2 + b\xi + \gamma=0$, where $ b =-( 1+\gamma-\alpha \lambda_1 ) $. If $ \alpha > 2(1+\gamma) / \lambda_1 $, we have $b > 1+\gamma  $. Hence, the discriminant of this equation $ b^2 - 4\gamma > (1-\gamma)^2\ge 0 $. Since $ 0<\gamma\le 1 $,  one root of this equation should be $  (-b - \sqrt{b^2 - 4\gamma}) /2 < \{-(1+\gamma) - (1-\gamma) \}/ 2= -1$, i.e., $ \rho(G(\lambda_{1}))>1 $. Thus, in either case, we should have $ \rho(D)>1 $. We complete the proof.
\end{proof}

\newpage


%
%

\renewcommand{\bibsection}{}
\scsection{REFERENCES}
\bibliographystyle{apalike}
\bibliography{ref}

\end{document}